\documentclass[12pt]{article}
\usepackage{amssymb,amsmath,amsthm,amsfonts}
\usepackage{setspace}
\usepackage{graphicx}
\usepackage{enumerate}
\usepackage{bm}
\usepackage{bbm} 
\usepackage{dsfont}
\usepackage[authoryear,longnamesfirst]{natbib}
\usepackage{fullpage}
\usepackage{multirow, multicol,tasks,booktabs}
\usepackage{tikz}
\usetikzlibrary{arrows.meta,calc,positioning,shapes.geometric}
\usepackage{hyperref}
\usepackage[shortlabels]{enumitem}
\usepackage{todonotes}
\newtheorem{thm}{Theorem}[section]
\newtheorem{asm}{Assumption}

\newtheorem{prop}{Proposition}[section]
\newtheorem*{prop*}{Proposition}

\theoremstyle{definition}

\newtheorem{example}{Example}

\newtheorem*{not*}{Notations}
\usepackage{verbatim}
\usepackage{amsthm}

\allowdisplaybreaks

\title{Bounds for Standard Errors in Combined Data\thanks{We thank Isaiah Andrews, Pol Antrás, Michal Kolesár, Mikkel Plagborg-Møller, Elie Tamer, and participants of the 2023 New York Camp Econometrics, the 2025 Canadian Econometric Study Group, and the Princeton Econometrics Workshop for their valuable comments. This work is based on merging ``Bounds for Standard Errors from Interdependent Data'' by Jooyoung Cha and Yuya Sasaki with ``Top to Bottom: Best-case Standard Errors for Calibrated Model Parameters'' by Nelson Matthew P. Tan. All remaining errors are our own. Yuya Sasaki gratefully acknowledges generous funding from Brian and Charlotte Grove.}}
\author{
Jooyoung Cha\thanks{Jooyoung Cha: \texttt{jcha23@nd.edu}.
Department of Economics,
University of Notre Dame,
3060 Jenkins Nanovic Halls, 
Notre Dame, IN 46556} 
\\{\small Notre Dame}
\and 
Yuya Sasaki\thanks{Yuya Sasaki: \texttt{yuya.sasaki@vanderbilt.edu}.
Department of Economics,
Vanderbilt University,
VU Station B \#351819,
2301 Vanderbilt Place,
Nashville, TN 37235-1819}
\\{\small Vanderbilt}
\and
Nelson Matthew P. Tan\thanks{Nelson Matthew P. Tan: \texttt{matti.tan@princeton.edu}. 
Department of Economics, 
Princeton University, 
A85 Julis Romo Rabinowitz Building, 
Princeton, NJ 08544}
\\{\small Princeton}
}

\begin{document}
\maketitle
\begin{abstract}\setlength{\baselineskip}{7mm}
We propose methods for constructing lower bounds on the standard errors of parameters estimated from moment conditions obtained across different samples. 
Sharp explicit bounds are derived by exploiting geometric inequalities when no information about correlations across samples is available. 
Furthermore, we develop computationally tractable sharp bounds for more general settings with no or partial correlation information, which can be obtained by solving a simple semidefinite program. 
Finally, we illustrate the practical usefulness of our method through three empirical cases: two macroeconomics examples involving menu cost and Heterogeneous Agent New-Keynesian models; and a two sample instrumental variable microeconomic study. 
\bigskip\\
{\bf Keywords:} data combination, standard error, upper and lower bounds.
\smallskip\\
{\bf JEL Code:} C10
\end{abstract}

\newpage
\section{Introduction}\label{sec:intro}

Researchers frequently combine empirical moments from multiple, and potentially interdependent, data sources. 
Examples include integrating survey data with administrative records, combining cross-sectional data with time-series data, and using time-series observations recorded at different frequencies. 
In applications such as meta-analysis and calibration, the focus is often placed more on parameter estimation than on statistical inference. 
This emphasis may reflect the fact that conventional inference methods are typically infeasible, because the covariances of empirical moments across samples are often unknown, rendering standard errors for the parameter of interest impossible to compute.

When a covariance is not directly observable, it can still be bounded using marginal variances. 
Recent work by \citet{mikkel} derives a sharp upper bound on the covariance from marginal variances via the Cauchy--Schwarz inequality, and \citet{vohra2025inference} extends \citet{mikkel} along several dimensions. 
We complement this literature by characterizing the full range of standard errors that are feasible when only marginal variances are known. In addition to the sharp worst-case standard error, we derive the sharp best-case standard error and show that it is governed by a geometric cancellation condition. The resulting interval describes exactly how much uncertainty about the standard error is induced by the absence of covariance information.

The diagonal-only case provides the least-informative benchmark. The upper bound corresponds to maximal alignment of the sampling errors across empirical moments. The lower bound corresponds to maximal cancellation. This lower bound is often zero, but this should not be interpreted as a defect of the bound. Rather, it reveals when the unknown covariance structure can, in principle, offset the marginal sampling uncertainty in the combined estimator. Our characterization makes this possibility precise. With two moments, exact cancellation requires a knife-edge balance between the two components. With three or more moments, the condition has a geometric interpretation: the marginal contributions must be capable of closing into a polygon. Thus, the sharp lower bound describes the covariance configurations under which the combined estimator can have a very small standard error despite nontrivial uncertainty in its individual components.

The diagonal-only bounds are useful not because researchers should necessarily base inference on either extreme, but because they describe the full range of standard errors consistent with the available information. Figure~\ref{fig:correlation-decision-tree} outlines how applied researchers can use the bounds we derive and the existing knowledge of the data as a diagnostic for adjusting their empirical design.  If both the lower and upper bounds are large, the data are uninformative for the parameter of interest regardless of the unknown covariance structure. If the bounds are far apart, the missing covariance information is consequential: different admissible covariance structures lead to substantively different inferential conclusions. In that case, the bounds may help decide whether it is worth collecting, estimating, or justifying additional information about the covariance structure.

In many combined-data applications, researchers may know more than the marginal variances. The design of the data combination may imply that some components are estimated from independent samples, that two samples partially overlap, that dependence is confined within blocks, or that some pairwise correlations have known signs or plausible magnitude restrictions. We show how such information sharpens the standard error bounds. Information that rules out strong negative effective dependence raises the lower bound by limiting cancellation. Information that rules out strong positive effective dependence lowers the upper bound by limiting alignment. Exact independence, block independence, bounded sample overlap, and sign restrictions therefore have transparent effects on the range of feasible standard errors.

\begin{figure}[tbp]
  \centering
  \resizebox{0.98\textwidth}{!}{
%
\begin{tikzpicture}[
  font=\linespread{1}\sffamily\small,
  >=Latex,
  node distance=9mm and 10mm,
  decision/.style={
    rounded corners=2.5pt,
    draw=black!75,
    very thick,
    fill=blue!6,
    align=center,
    inner xsep=6pt,
    inner ysep=6pt,
    text width=5.4cm
  },
  proc/.style={
    rounded corners=2.5pt,
    draw=black!70,
    thick,
    fill=black!3,
    align=center,
    inner xsep=6pt,
    inner ysep=6pt,
    text width=4.7cm
  },
  case/.style={
    rounded corners=2pt,
    draw=black!60,
    thick,
    fill=orange!8,
    align=center,
    inner xsep=5pt,
    inner ysep=5pt,
    text width=4.0cm
  },
  outcome/.style={
    rounded corners=2.5pt,
    draw=black!70,
    thick,
    fill=green!8,
    align=center,
    inner xsep=6pt,
    inner ysep=6pt,
    text width=4.4cm
  },
  action/.style={
    rounded corners=2.5pt,
    draw=black!70,
    thick,
    fill=green!8,
    align=center,
    inner xsep=6pt,
    inner ysep=6pt,
    text width=4.4cm
  },
  edge/.style={->, thick, black!75},
  branch/.style={thick, black!75},
  lab/.style={font=\linespread{1}\sffamily\scriptsize, inner sep=1.5pt, fill=white}
]
 
\node[decision] (know) at (0,0) {Do I know the full correlation\\structure of my data?};
 
\node[proc, text width=5.6cm] (gmm) at (-4.9,-2.1) {Use conventional GMM or\\minimum-distance inference.};
 
\node[proc] (bounds) at (2.0,-2.1) {Compute lower and upper\\bounds on standard errors.};
\node[decision] (compare) at (2.0,-4.2) {What do the standard-error\\bounds imply?};
 
\node[case, text width=4.4cm] (closecase) at (-2.5,-6.4) {Bounds are close, i.e.\\both small or large.};
\node[outcome, text width=4.8cm] (closeoutcome) at (-2.5,-8.65) {Additional knowledge over\\unknown covariances likely\\uninformative.};
 
\node[case, text width=4.7cm] (differentcase) at (6.5,-6.4) {Bounds are very different:\\lower bound $\approx 0$, upper\\bound $\gg |\text{point estimate}|$.};
\node[decision, text width=4.9cm] (corr) at (6.5,-8.65) {Do I know the correlations\\among some moments?};
 
\node[proc, text width=4.2cm] (sdp) at (3.7,-10.85) {Use the SDP procedure\\(Section~\ref*{sec:SDP}).};
\node[decision, text width=5.7cm] (after) at (3.7,-13.05) {After the SDP procedure, is the\\lower bound still $\approx 0$ and the\\upper bound still large?};
 
\node[outcome, text width=4.4cm] (partial) at (0.9,-15.75) {Partial-information\\design is informative.};
\node[proc, text width=4.9cm] (worth) at (6.5,-15.75) {Additional correlation\\information may be worthwhile.};
 
\node[action, text width=4.2cm] (estimate) at (3.8,-18.35) {Estimate the joint\\correlation structure\\of the data.};
\node[action, text width=5.2cm] (assume) at (9.0,-18.35) {Impose additional assumptions\\or structure on\\moment correlations.};
 
\coordinate (topSplit) at ($(know.south)+(0,-0.55)$);
\draw[edge] (know.south) -- (topSplit);
\draw[edge] (topSplit) -| node[lab,pos=.30,above] {Yes} (gmm.north);
\draw[edge] (topSplit) -| node[lab,pos=.28,above] {No} (bounds.north);
 
\draw[edge] (bounds.south) -- (compare.north);
 
\coordinate (caseStem) at (2.0,-5.25);
\coordinate (caseLeft) at (-2.5,-5.25);
\coordinate (caseRight) at (6.5,-5.25);
\draw[branch] (compare.south) -- (caseStem);
\draw[branch] (caseLeft) -- (caseRight);
\draw[edge] (caseLeft) -- (closecase.north);
\draw[edge] (caseRight) -- (differentcase.north);
 
\draw[edge] (closecase.south) -- (closeoutcome.north);
\draw[edge] (differentcase.south) -- (corr.north);
 
\draw[edge] (corr.south) -- ++(0,-0.45) -| node[lab,pos=.30,above] {Yes} (sdp.north);
\draw[edge] (corr.east) -- ++(0.6,0) |- node[lab,pos=.32,right] {No} (worth.east);
 
\draw[edge] (sdp.south) -- (after.north);
 
\coordinate (afterHub) at ($(after.south)+(0,-0.55)$);
\coordinate (afterLeft) at (partial.north |- afterHub);
\coordinate (afterRight) at (worth.north |- afterHub);
\draw[branch] (after.south) -- (afterHub);
\draw[branch] (afterLeft) -- (afterRight);
\draw[edge] (afterLeft) -- node[lab,pos=.45,left] {No} (partial.north);
\draw[edge] (afterRight) -- node[lab,pos=.45,right] {Yes} (worth.north);
 
\coordinate (worthHub) at ($(worth.south)+(0,-0.50)$);
\coordinate (estimateHub) at (estimate.north |- worthHub);
\coordinate (assumeHub) at (assume.north |- worthHub);
\draw[branch] (worth.south) -- (worthHub);
\draw[branch] (estimateHub) -- (assumeHub);
\draw[edge] (estimateHub) -- (estimate.north);
\draw[edge] (assumeHub) -- (assume.north);
 
\end{tikzpicture}}
  \caption{Decision tree for choosing an empirical strategy under partial information about the correlation structure.}
  \label{fig:correlation-decision-tree}
\end{figure}

We develop two complementary approaches. First, we provide analytic characterizations for several empirically relevant forms of covariance information. For example, when moments can be partitioned into independent blocks, the bounds decompose block by block. When two components are estimated from partially overlapping samples, the overlap structure can place direct limits on the magnitude of their correlation. These cases show how features of the data-combination design translate into sharper inference. Second, for more general forms of partial covariance information, we formulate the problem as a semidefinite program (SDP). This computational approach allows researchers to incorporate arbitrary linear or convex restrictions on the unknown covariance matrix and obtain the corresponding standard-error bounds using standard numerical solvers.

Our framework applies to a wide range of empirical settings. In two-sample instrumental variables, conventional inference often relies on an independence assumption between the sample used to estimate the first stage and the sample used to estimate the reduced form. Our approach nests that assumption but also allows researchers to examine weaker alternatives, such as bounded cross-sample dependence. Similar issues arise in empirical macroeconomics, where constructed or generated shock series are often estimated in one sample and then used in downstream analyses. Classical generated-regressor arguments, such as those associated with \cite{pagan1984econometric}, justify standard inference when the relevant covariance structure is negligible or consistently estimable. But when a constructed shock is made externally and then used in a different estimating sample, the relevant joint covariance may not be directly recoverable. Moreover, calibration, the most common setting in macroeconomics applicable to our framework \citep{kaplan&violante:2014, mckayetal:2016, nakamura&steinsson:2018, kydland&prescott:1982, gourinchas&parker:2003, arellano:2008, eatonetal:2011}, faces an analogous problem under an unknown covariance structure among moments used for matching. Our bounds then quantify the range of standard errors consistent with the available marginal information.

The paper contributes to the broader literature on data combination, where much of the existing work focuses on identification and estimation rather than inference; see \cite{handbook}. It also contributes to recent work on inference when covariance information is incomplete, especially \cite{mikkel} and \cite{vohra2025inference}. Relative to this literature, our contribution is threefold. We derive sharp lower as well as upper bounds on standard errors under diagonal-only information. We characterize the geometry of the lower bound and the conditions under which exact cancellation is possible. Finally, we show how partial information about the covariance structure moves the identified interval from the diagonal-only benchmark toward the usual full-information standard error.

The remainder of the paper is organized as follows. 
Section~\ref{sec:framework} introduces the framework. 
Section~\ref{sec:analytic} derives the explicit sharp bounds. Section~\ref{sec:cov_info} characterizes the set of achievable standard errors and how additional information on the covariances of the moments can tighten the bounds on this set. Section~\ref{sec:SDP} presents the SDP approach. Section~\ref{sec:empirics} illustrates empirical applications. 
Section~\ref{sec:summary} concludes. 
All mathematical proofs are collected in the appendix.

\section{The Framework}\label{sec:framework}
Suppose that we are interested in a scalar parameter $\varphi(\theta)$, where $\varphi : \mathbb{R}^k \to \mathbb{R}$ is a known function of a $k$-dimensional vector of structural parameters $\theta$. 
The structural parameters, in turn, determine a $p$-dimensional vector of reduced-form parameters $\mu = h(\theta)$ through a known injection $h : \mathbb{R}^k \to \mathbb{R}^p$.

If estimates $\hat{\mu} = (\hat{\mu}_1, \cdots, \hat{\mu}_p)'$ of the reduced-form parameters $\mu$ are available, then standard inference about $\varphi(\theta)$ is based on the linear approximation
\begin{align}\label{eq:phi_linear}
    \varphi(\hat{\theta}) - \varphi(\theta) = \ell' (\hat{\mu} - \mu) + o_p(n^{-1/2}).
\end{align}
The results below rely only on the linear representation in \eqref{eq:phi_linear}. The structural mapping $h$ provides one leading way to obtain the loading vector $\ell$, but the same framework applies to any estimator whose scalar target admits such a first-order expansion, including moment-matching and two-step estimators discussed below.

In the exactly identified case in which $\hat\theta=h^{-1}(\hat\mu)$, the loading vector is
\[
\ell =
\frac{\partial \varphi}{\partial \theta'}(\theta)
\cdot
\frac{\partial h^{-1}}{\partial \mu}(\mu).
\]

In particular, if we have the standard convergence in distribution
\begin{align}\label{eq:mu_asym}
    \sqrt{n} (\hat{\mu} - \mu) \overset{d}{\to} N(0, \Sigma),
\end{align}
then the delta method yields
\begin{align}\label{eq:phi_asym}
    \sqrt{n}(\varphi(\hat{\theta}) - \varphi(\theta)) \overset{d}{\to} N(0, \sigma^2), \\
    \text{where} \quad \sigma^2 = \ell' \Sigma \ell. \notag
\end{align}
This asymptotic distribution facilitates inference about $\varphi(\theta)$, provided that the asymptotic variance $\sigma^2$ is estimable.

However, if the reduced-form estimates $\hat{\mu}_1, \cdots, \hat{\mu}_p$ are obtained from different samples, then not all elements of the asymptotic covariance matrix $\Sigma$ are estimable. 
To fix ideas, suppose that $\hat{\mu}_j$ is computed from a sample $X_j = (X_{j,1}, \cdots, X_{j,n_j})$, where each $X_j$ is observed separately for $j = 1, \cdots, p$. 
In this setting, only the diagonal elements of $\Sigma$ are estimable:
\[
s_j^2 := \Sigma_{jj}, \quad \text{for } j = 1, \cdots, p.
\]
With the off-diagonal elements of $\Sigma$ unknown, however, it is not possible to estimate the asymptotic variance $\sigma^2 = \ell' \Sigma \ell$ of the parameter $\varphi(\theta)$.

\bigskip
\noindent
\begin{example}
    Many economic questions can be formulated within this framework of data combination. 
    For example, consider the moment-matching estimator
    \begin{align*}
        \hat{\theta} = \underset{\theta \in \Theta}{\arg\min} \  (\hat{\mu} - \varphi(\theta))' \hat{W} (\hat{\mu} - \varphi(\theta)),
    \end{align*}
    where $\Theta$ is a compact parameter space. 
    The loading vector $\ell$ can be estimated by 
    \begin{align}
    \label{eq:min_distance_loadings}
    &\hat{\ell} = \hat{W} \hat{G} (\hat{G}' \hat{W} \hat{G})^{-1}\hat{\lambda}
    \\
    &\text{with} \quad \hat{G} \equiv \frac{\partial h(\hat{\theta})}{\partial \theta'} \text{ and } \hat{\lambda} \equiv \frac{\partial \varphi(\hat{\theta})}{\partial \theta}.
    \notag
    \end{align}
    However, if the moments $\hat{\mu}_1, \cdots, \hat{\mu}_p$ are obtained from separate samples $X_1, \cdots, X_p$, the covariance matrix $\Sigma$ is not estimable. 
    In particular, while $\hat{\ell}$ is estimable, the asymptotic variance $\sigma^2 = \ell' \Sigma \ell$ of $\varphi(\theta)$ remains unidentified due to the unobservability of the off-diagonal elements of $\Sigma$.
    $\blacktriangle$
\end{example}

\bigskip
\noindent
\begin{example}
    Our framework encompasses two-step estimation, where an ancillary parameter $\gamma$ is first estimated from a sample $\{Z_{j}\}_{j=1}^{n_{1}}$, and the resulting estimate $\hat\gamma$ is then used to estimate the parameter of interest $\beta$ using another sample $\{X_{i}\}_{i=1}^{n_{2}}$. Our framework becomes essential when the score for $\beta$ based on $\{X_{i}\}_{i=1}^{n_{2}}$ has unknown correlations with the first-stage estimate $\hat\gamma$ which is constructed from the data $\{Z_{j}\}_{j=1}^{n_{1}}$.

    For instance, consider a two-step estimator of the form 
    \begin{equation*}
        \hat{\beta} = \underset{\beta \in B}{\arg\max}\frac{1}{n_{2}}\sum_{i=1}^{n_{2}}f(X_{i},\beta,\hat{\gamma}),
    \end{equation*}
    where $\hat{\gamma}$ is obtained via some first stage estimator on $\{Z_{j}\}_{j=1}^{n_{1}}$. Under regularity conditions, one can rewrite $\hat{\beta}$ as the solution to the moment condition 
    \begin{equation*}
        \frac{1}{n_{2}}\sum_{i=1}^{n_{2}}\frac{\partial}{\partial \beta}f(X_{i},\beta,\hat{\gamma}) = 0
    \end{equation*}
    with a similar reformulation for $\hat{\gamma}$. Collecting these orthogonality-type conditions and parameters into vectors $\hat{\mu}$ and $\theta = (\gamma, \beta)'$, respectively, we obtain exactly the same setting as in the previous example.

    Practitioners using two-step estimation with separate data often compute standard errors via the bootstrap. However, bootstrapping with independent resampling across the two samples implicitly assumes that the two datasets are independent. If this assumption is not warranted, the resulting bootstrap standard errors may be misleading.
    $\blacktriangle$
\end{example}
\bigskip

In the remainder of the paper, we propose methods to obtain sharp lower bounds for the asymptotic variance $\sigma^{2}$ of interest, using the marginal variances $s_{j}^{2}$ for $j = 1, \dots, p$, together with no or only partial information about the covariances.

\section{Explicit Sharp Bounds}\label{sec:analytic}

If the researcher has information only about the diagonal elements of $\Sigma$, then explicit sharp bounds for $\sigma$ can be derived. 
The following theorem formally states this result.

\begin{thm}\label{thm:explicit}
Under the framework \eqref{eq:phi_linear}--\eqref{eq:phi_asym}, the sharp lower and upper bounds for $\sigma$, when the researcher knows only the diagonal elements of $\Sigma$, are explicitly given by
\begin{align*}
    \max\left\{\max_{m}\left\{|\ell_{m}|s_{m} - \sum_{j \neq m}|\ell_{j}|s_{j}\right\},0\right\} \qquad\text{and}\qquad \sum_{j=1}^p |\ell_j|s_j,
\end{align*}
respectively.
\end{thm}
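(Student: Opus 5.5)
The plan is to recast the problem geometrically. Any positive semidefinite $\Sigma$ with diagonal $(s_1^2,\dots,s_p^2)$ is a Gram matrix: there are vectors $v_1,\dots,v_p\in\mathbb{R}^p$ with $\|v_j\|=s_j$ and $\Sigma_{jk}=\langle v_j,v_k\rangle$. Conversely, any such vectors produce an admissible $\Sigma$; it is PSD and can be realized as the covariance of a Gaussian vector. Under this correspondence,
\[
\sigma=\sqrt{\ell'\Sigma\ell}=\Big\|\sum_{j=1}^p \ell_j v_j\Big\|.
\]
Set $w_j:=\ell_j v_j$, so that $\|w_j\|=|\ell_j|s_j=:a_j$. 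Because each $v_j$ ranges over the whole sphere of radius $s_j$, each $w_j$ ranges over the whole sphere of radius $a_j$. The sign of $\ell_j$ can be absorbed by flipping $v_j$, and when $\ell_j=0$ we have $a_j=0$. The set of feasible $\sigma$ is therefore exactly the set of values $\|\sum_j w_j\|$ with $\|w_j\|=a_j$, and sharpness reduces to computing the infimum and supremum of this norm and checking that both are attained.

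The upper bound is routine. The triangle inequality gives $\|\sum_j w_j\|\le\sum_j a_j$. Equality holds when all $w_j$ equal $a_j u$ for a common unit vector $u$, which corresponds to perfect correlation $\Sigma_{jk}=\operatorname{sign}(\ell_j\ell_k)s_js_k$. So $\sum_j|\ell_j|s_j$ is attained.

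For the lower bound, fix $m$. The reverse triangle inequality gives $\|\sum_j w_j\|\ge a_m-\sum_{j\ne m}a_j$, and since $\sigma\ge 0$ trivially, the stated expression is a valid lower bound. For sharpness, let $M$ be an index maximizing $a_m$. Note that $a_m-\sum_{j\neq m}a_j>0$ can hold only for $m=M$.

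There are two cases.

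\emph{Case 1: $a_M>\sum_{j\ne M}a_j$.} Take $w_M=a_Mu$ and $w_j=-a_ju$ for $j\ne M$. This makes the norm exactly $a_M-\sum_{j\ne M}a_j$, which equals the bound.

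\emph{Case 2: $a_M\le\sum_{j\ne M}a_j$.} Here we must show that exact cancellation $\sum_j w_j=0$ is achievable, i.e.\ that the lengths $a_j$ close into a polygon. This is the main obstacle. I would handle it with a continuity (intermediate value) argument in the plane. Place $w_M=a_Mu$, and let the other vectors be $w_j=-a_j e(\alpha)$ with $e(\alpha)$ a unit vector at angle $\alpha$ from $u$. Rotating the vectors $w_j$, $j\neq M$, continuously, the sum $S=\sum_{j\ne M}w_j$ has a norm that moves continuously from $\sum_{j\ne M}a_j$ (all aligned) down to $\max\{0,\,2\max_{j\neq M}a_j-\sum_{j\neq M}a_j\}$. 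The latter value is at most $a_M$, because the second-largest $a_j$ is at most $a_M$.

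Concretely, order the indices $j\neq M$ and split them into two groups whose sums differ by at most the largest element. Tilt one group by angle $\alpha$ relative to the other; $\|S\|$ then varies continuously in $\alpha\in[0,\pi]$ from $\sum_{j\neq M}a_j\ge a_M$ to a value $\le a_M$. By the intermediate value theorem there is an $\alpha$ with $\|S\|=a_M$. Finally, rotate the whole configuration so that $S=-w_M$, which gives $\sigma=0$. The resulting Gram matrix is PSD with the prescribed diagonal, so the lower bound $0$ is attained.

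Taken together, the bound in Case 1 reduces to $\max\{\max_m(\cdot),0\}$, as do the zero values in Case 2. Finally, since the feasible set is the continuous image of a connected set (a product of spheres), every value between the two bounds is also attained.
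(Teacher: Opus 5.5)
Your proof is correct and follows essentially the paper's route. Your Gram vectors $w_j=\ell_j v_j$ are the Euclidean counterpart of the paper's Gaussian variables $v\sim N(0,(\ell\ell')\circ\Sigma)$ in $L^2$, and both arguments use the triangle and reverse triangle inequalities, rank-one aligned/anti-aligned configurations for the upper bound and for the case $a_M>\sum_{j\neq M}a_j$, and a closed planar polygon for the zero case.

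There are two differences. First, you prove the polygon-closure step yourself rather than citing Kapovich--Millson: you split the remaining indices into two groups whose sums differ by at most $\max_{j\neq M}a_j\le a_M$ (such a split exists, e.g.\ by greedy assignment) and then apply the intermediate value theorem in the tilt angle. Second, fixing $M$ as the maximizing index makes explicit something the paper's per-$m$ case split leaves implicit: the polygon closes only if the condition holds for the largest $a_m$.
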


\noindent
Proof of this statement is found in Appendix \ref{sec:thm:explicit}.

The sharp upper bound coincides with the result of \citet{mikkel}; this component of Theorem \ref{thm:explicit} is therefore not novel, and we include it solely for completeness and the convenience of the reader. In contrast, the sharp lower bound established in Theorem \ref{thm:explicit} is new to the literature. 

By the continuous mapping theorem, these sharp bounds can be consistently estimated under the following assumption.

\begin{asm}\label{asm:sufficient}
$
\hat{\ell}_j \overset{p}{\to} \ell_j
$    
and
$
\hat{s}_j \overset{p}{\to} s_j
$    
for all $j \in \{1,\cdots,p\}$. 
\end{asm}

\begin{prop}\label{prop:explicit_consistency}
Under Assumption \ref{asm:sufficient}, the sharp lower and upper bounds of $\sigma$, when the researcher knows only the diagonal elements of $\Sigma$, are consistently estimated by
\begin{align*}
    \max \left\{\max_{m}\left\{|\hat \ell_m|\hat s_m - \sum_{j\neq m}|\hat \ell_j|\hat s_j\right\}, 0 \right\} \qquad\text{and}\qquad \sum_{j=1}^p |\hat \ell_j|\hat s_j, \qquad\text{respectively.}
\end{align*}
\end{prop}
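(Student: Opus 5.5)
The plan is to apply the continuous mapping theorem to the map sending $(\ell_1,\dots,\ell_p,s_1,\dots,s_p)$ to the two bounds in Theorem~\ref{thm:explicit}. For this we only need that this map is continuous on $\mathbb{R}^{2p}$. Assumption~\ref{asm:sufficient} gives coordinatewise convergence in probability of $(\hat\ell_j,\hat s_j)$ to $(\ell_j,s_j)$ for each $j$. Since there are finitely many coordinates, this implies joint convergence in probability of the $2p$-vector $(\hat\ell_1,\dots,\hat\ell_p,\hat s_1,\dots,\hat s_p)$ to its limit. Concretely, for any $\varepsilon>0$ the probability that the Euclidean distance exceeds $\varepsilon$ is at most $\sum_j \Pr(|\hat\ell_j-\ell_j|>\varepsilon/\sqrt{2p}) + \sum_j \Pr(|\hat s_j-s_j|>\varepsilon/\sqrt{2p})\to 0$.

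Next, I would verify continuity of the two functions.
\begin{itemize}
\item The upper bound $U(\ell,s)=\sum_{j=1}^p|\ell_j|s_j$ is built from the absolute value, products and finite sums, so it is continuous.
\item For the lower bound, each inner term $g_m(\ell,s)=|\ell_m|s_m-\sum_{j\neq m}|\ell_j|s_j$ is continuous for the same reason.
\item The function $L(\ell,s)=\max\{\max_m g_m(\ell,s),0\}$ is a maximum of finitely many continuous functions, including the constant $0$. Such a maximum is continuous, because $\max\{a,b\}=(a+b+|a-b|)/2$ and one can induct on the number of arguments.
\end{itemize}
No restriction such as $s_j>0$ or $\ell_j\neq 0$ is required, since the map is continuous everywhere.

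Finally, the continuous mapping theorem for convergence in probability gives $L(\hat\ell,\hat s)\overset{p}{\to}L(\ell,s)$ and $U(\hat\ell,\hat s)\overset{p}{\to}U(\ell,s)$. By Theorem~\ref{thm:explicit}, these limits are exactly the sharp lower and upper bounds for $\sigma$, which completes the argument.

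The only step needing any care is the passage from marginal to joint convergence in probability, and it is handled by the union bound above. I do not expect a genuine obstacle beyond it. If $\hat s_j$ were instead given as an estimator of $s_j^2$, one would additionally invoke continuity of the square root on $[0,\infty)$, but the assumption as stated is already phrased in terms of $s_j$.
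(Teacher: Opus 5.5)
Your proposal is correct and follows the paper's route. The paper justifies this proposition with a single appeal to the continuous mapping theorem applied to the explicit formulas of Theorem~\ref{thm:explicit}; you supply the details it leaves implicit, namely joint convergence in probability via the union bound and continuity of the finite maximum.
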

The analytical bounds can be interpreted as the following.
The upper bound is attained when the estimation errors associated with all moments are perfectly aligned in the direction that increases the variance of the scalar target. The lower bound is attained when these components cancel as much as possible. With two moments, exact cancellation is possible if and only if $|\ell_{1}|s_{1} = |\ell_{2}|s_{2}$. More generally, the lower bound is zero if and only if the largest $|\ell_j| s_j$ is no larger than the sum of the remaining $|\ell_k|s_k$'s. This is the polygon condition: the lengths $|\ell_1|s_1,\dots,|\ell_p|s_p$ can be arranged as vectors whose sum is zero. Thus, a zero lower bound reflects feasible covariance cancellation under diagonal-only information, not a failure of the bound.

The analytical lower bound provided by Theorem \ref{thm:explicit} and Proposition \ref{prop:explicit_consistency} allows all correlation structures consistent with positive semidefiniteness. In many applications, however, the researcher may have some information about the signs, magnitudes, or block structure of the unknown correlations. Such information restricts the covariance structures that can generate maximal cancellation or maximal alignment. The subsequent sections formalize this idea.

\section{Covariance Information and Bound Tightening}\label{sec:cov_info}
This section introduces empirically plausible covariance information and how it may tighten the bounds in Section \ref{sec:analytic}.
We show that the lower bound is driven by the extent to which the unknown covariance structure can generate cancellation across empirical moments, whereas the upper bound is driven by the extent to which it can generate alignment. This representation also clarifies how additional information about the covariance matrix tightens the bounds.

\subsection{The Range of Achievable Standard Errors}\label{sec:range}

Before investigating the role of covariance information, we first characterize the set of achievable standard errors without information restriction. Furthermore, we study where the best-case and worst-case correlation structures are located in the set of feasible correlation matrices.

Proposition~\ref{prop_se_range} establishes that the set of achievable standard errors takes the form of a simple interval, with no gaps between its endpoints. Consequently, the full-information standard errors, those 
computed with knowledge of the off-diagonal elements of $\hat{\Sigma}$, can be expressed as a weighted average of the best-case and worst-case standard errors.

\begin{prop}
    \label{prop_se_range}
The set of achievable standard errors forms a closed interval with two endpoints given by the best-case and worst-case standard errors. In particular, the full-information standard errors satisfy
\begin{align}
    \mathrm{SE}_{\mathrm{F}}(\theta) 
    &= \alpha \, \mathrm{SE}_{\mathrm{BCS}}(\theta) 
       + (1 - \alpha)\, \mathrm{SE}_{\mathrm{WCS}}(\theta),
    \quad \alpha \in [0,1],
\end{align}
where $\mathrm{SE}_{\mathrm{F}}(\theta)$ denotes the full-information standard errors, $\mathrm{SE}_{\mathrm{BCS}}(\theta)$ the best-case standard errors, and $\mathrm{SE}_{\mathrm{WCS}}(\theta)$ the worst-case standard errors.

\end{prop}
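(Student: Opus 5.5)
The plan is to view the achievable standard errors as the image of a connected, compact set under a continuous map. Let $D=\mathrm{diag}(s_1,\dots,s_p)$. The feasible covariance matrices consistent with the known diagonal are then $\Sigma = D R D$, where $R$ ranges over the elliptope
\[
\mathcal{E}_p=\{R\in\mathbb{R}^{p\times p}: R\succeq 0,\ R_{jj}=1 \text{ for all } j\}.
\]
Equivalently, they form the set $\mathcal{S}=\{\Sigma\succeq 0:\Sigma_{jj}=s_j^2\}$. This set is convex: positive semidefiniteness and the linear diagonal constraints are both preserved under convex combinations. It is also closed and bounded, since $|\Sigma_{jk}|\le s_js_k$. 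Hence $\mathcal{S}$ is compact and path-connected.

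The map $\Sigma\mapsto \sigma(\Sigma)=\sqrt{\ell'\Sigma\ell}$ is continuous on $\mathcal{S}$, because $\ell'\Sigma\ell\ge 0$ and the square root is continuous on $[0,\infty)$. The image of a compact connected set under a continuous real function is a compact interval $[\sigma_{\min},\sigma_{\max}]$. By definition, $\sigma_{\min}$ is the best-case standard error and $\sigma_{\max}$ the worst-case standard error. If restricted information is imposed, the same argument applies to any closed convex subset of $\mathcal{S}$, such as one defined by linear sign, zero or magnitude restrictions. Theorem~\ref{thm:explicit} identifies the endpoints explicitly in the diagonal-only case, but closedness and the absence of gaps follow from compactness and convexity alone.

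A more explicit route to the no-gaps property is to take minimizing and maximizing matrices $\Sigma_B,\Sigma_W\in\mathcal{S}$ and consider the segment $\Sigma_t=(1-t)\Sigma_B+t\Sigma_W$. Along it, $\ell'\Sigma_t\ell$ is affine in $t$. Its square root is therefore continuous and monotone from $\mathrm{SE}_{\mathrm{BCS}}$ to $\mathrm{SE}_{\mathrm{WCS}}$, so every intermediate value is attained. This is the step that must be stated with care: the interval property holds for the standard error itself, not merely for the variance, which is why continuity of the square root on the nonnegative reals is used.

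For the representation of $\mathrm{SE}_{\mathrm{F}}$, the true (full-information) covariance $\Sigma$ belongs to $\mathcal{S}$. Hence $\mathrm{SE}_{\mathrm{F}}\in[\mathrm{SE}_{\mathrm{BCS}},\mathrm{SE}_{\mathrm{WCS}}]$. If the endpoints differ, set
\[
\alpha=\frac{\mathrm{SE}_{\mathrm{WCS}}-\mathrm{SE}_{\mathrm{F}}}{\mathrm{SE}_{\mathrm{WCS}}-\mathrm{SE}_{\mathrm{BCS}}}\in[0,1].
\]
If the endpoints coincide, any $\alpha\in[0,1]$ works, for example $\alpha=1$. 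The main potential obstacle is a purely definitional point rather than a technical one: the best and worst cases must be genuinely attained, so that the interval is closed. This is guaranteed by compactness of $\mathcal{S}$, or of any closed subset of it that encodes partial information.
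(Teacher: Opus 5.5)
Your proposal is correct, and it rests on the same idea as the paper: the feasible covariance set is convex and $\ell'\Sigma\ell$ is linear in $\Sigma$. The paper works with the variance interval $[\ell'\Sigma_{\mathrm{BCS}}\ell,\ \ell'\Sigma_{\mathrm{WCS}}\ell]$. It appeals to positive semidefiniteness of convex combinations of the two extreme covariance matrices and concludes that the full-information variance is a convex combination of the endpoint variances.

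Your version makes explicit several steps the paper leaves implicit. First, your segment $\Sigma_t$, checked to be positive semidefinite with the prescribed diagonal, is what actually shows that every intermediate value is attained by a feasible matrix; the paper's text mainly observes that the interval of values is itself convex. Second, your compactness argument guarantees that the best and worst cases are attained, which is what makes the interval closed, whereas the paper presupposes the optimizing matrices exist. Third, you pass through the square root, so your representation $\mathrm{SE}_{\mathrm{F}}=\alpha\,\mathrm{SE}_{\mathrm{BCS}}+(1-\alpha)\,\mathrm{SE}_{\mathrm{WCS}}$ holds for standard errors, as the proposition states. The paper's displayed identity is for variances, with a different $\alpha$; the translation is immediate by monotonicity of the square root, but the paper does not make it.

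Your topological formulation also carries over unchanged to any nonempty closed convex restriction set, such as the sets $\mathcal{T}$ of Section~\ref{sec:cov_info}.
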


\noindent
Proof of this statement is found in Appendix \ref{proof_prop_range}. 

We can also generally characterize where the minimal and maximal correlation structures lie within the set of feasible correlation matrices. In the special case of two moments, the best-case correlation occurs when the two moments are either perfectly positively or perfectly negatively correlated.

\begin{prop}
    \label{prop_structure_location}
The best-case correlation matrix, $\Omega_{\mathrm{BCS}}$, and the worst-case correlation matrix, $\Omega_{\mathrm{WCS}}$, lie on the boundary of the set of valid correlation matrices
\[
    C = \bigl\{ \Omega : \Omega \succeq 0,\ \Omega_{i,i} = 1 \ \ \forall i = 1,\cdots,p \bigr\}.
\]
When $p = 2$, the best-case correlation $\rho_{\mathrm{BCS}}$ between the two empirical moments depends on the sign of the product of the entries of $\ell$, so that 
\begin{align}
   \rho_{\mathrm{BCS}} = 
    \begin{cases}
        -1, & \text{if } \mathrm{sgn}(\ell_{1}\ell_{2}) > 0, \\[6pt]
         1, & \text{if } \mathrm{sgn}(\ell_{1}\ell_{2}) < 0,
    \end{cases}
\end{align}
and the opposite holds for the worst-case correlation, $\rho_{\mathrm{WCS}}$.
\end{prop}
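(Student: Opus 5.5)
We work with the reparametrization $\Sigma = D\Omega D$, where $D=\mathrm{diag}(s_1,\dots,s_p)$ and $\Omega\in C$. The target variance is then
\[
\sigma^2(\Omega) = v'\Omega v, \qquad v := D\ell, \qquad v_j=\ell_j s_j .
\]
The map $\Omega\mapsto v'\Omega v$ is linear, and $C$ (the elliptope) is a compact convex set. So $\Omega_{\mathrm{BCS}}$ and $\Omega_{\mathrm{WCS}}$ are a minimizer and a maximizer of a linear functional over a compact convex set.

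\textbf{Boundary claim.} The general argument is the standard one for linear programs. Assume $v\neq 0$; otherwise the functional is constant, and we may pick any boundary point such as the all-ones matrix. Suppose an optimizer $\Omega^*$ lay in the relative interior of $C$.

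The affine hull of $C$ is the set of symmetric matrices with unit diagonal. A nonconstant linear functional restricted to that affine space has a nonzero directional derivative along some symmetric direction $E$ with zero diagonal. One can take $E$ proportional to the off-diagonal part of $vv'$; its pairing with $vv'$ is $\sum_{i\neq j}v_i^2v_j^2$, which is nonzero unless $v$ has at most one nonzero entry.

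Moving slightly along $\pm E$ stays inside $C$ by interior-ness and strictly improves the objective. This contradicts optimality, so every optimizer lies on the relative boundary of $C$, namely the singular correlation matrices.

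In the degenerate case where $v$ has a single nonzero entry, $\sigma^2=v_m^2$ is constant on $C$. Both the best and worst cases can then be chosen on the boundary, and the statement should be read as saying that optimizers can be taken there. I expect this degenerate case to be the only real subtlety. A clean alternative route is to invoke the fact that a linear functional on a compact convex set attains its extrema at extreme points, which are boundary points.

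\textbf{Case $p=2$.} Here $C=\{\bigl(\begin{smallmatrix}1&\rho\\ \rho&1\end{smallmatrix}\bigr):\rho\in[-1,1]\}$ and
\[
\sigma^2(\rho)=\ell_1^2s_1^2+\ell_2^2s_2^2+2\rho\,\ell_1\ell_2 s_1s_2 .
\]
This is affine in $\rho$ with slope $2\ell_1\ell_2s_1s_2$. Its sign equals $\mathrm{sgn}(\ell_1\ell_2)$, assuming $s_j>0$.

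If $\ell_1\ell_2>0$, the function is strictly increasing in $\rho$. The minimum is then at $\rho=-1$, giving $\sigma=\bigl||\ell_1|s_1-|\ell_2|s_2\bigr|$, and the maximum is at $\rho=1$, giving $\sigma=|\ell_1|s_1+|\ell_2|s_2$. If $\ell_1\ell_2<0$, the roles of $\rho=-1$ and $\rho=1$ reverse.

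These values match Theorem \ref{thm:explicit}, which serves as a consistency check. The endpoints $\rho=\pm1$ are exactly where $\det\Omega=0$, so this also confirms the boundary claim in this case.
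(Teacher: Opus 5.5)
Your proposal is correct and takes essentially the same route as the paper. The objective $\Omega \mapsto \ell' D\Omega D\ell$ is linear, so it has no interior optimizer, and for $p=2$ the variance is affine in $\rho$ with slope of sign $\mathrm{sgn}(\ell_1\ell_2)$.

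Your version is also tighter than the paper's argument. You work with the relative interior; the paper's ball $B_\epsilon(\Omega^*)\subset C$ cannot exist, because $C$ lies in the unit-diagonal affine subspace. You exhibit a zero-diagonal direction $E$ along which the objective actually changes, rather than relying only on the ambient gradient being nonzero. And you flag the degenerate cases (a single nonzero $v_j$, or some $s_j=0$), where optimizers are not unique and can only be chosen on the boundary.
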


\noindent
Proof of this statement is found in Appendix \ref{proof_prop_location}.

In the $p = 2$ case of Proposition~\ref{prop_structure_location}, the dependence on the sign of 
the product of the entries of $\ell$ highlights the mechanism that drives the best-case 
correlation away from the worst-case correlation.

\subsection{Alignment and cancellation}
Let $D = \operatorname{diag}(s_1,\ldots,s_p)$ where $s_j^2=\Sigma_{jj}$ for $j=1,\cdots,p$.
Write $\Sigma = D R D$,
where $R$ is a correlation matrix. 
Define $z_j = |\ell_j|s_j$ for each $j = 1,\cdots,p$, and collect them as the vector $z=(z_1,\ldots,z_p)'$.
Let $S$ be the diagonal matrix with $j$-th diagonal entry equal to
$\operatorname{sign}(\ell_j)$, with an arbitrary sign convention when $\ell_j=0$, and define
the sign-adjusted correlation matrix
$T = SRS$.
Then, $T$ is also a correlation matrix, and the asymptotic variance can be written as
\begin{align}\label{eq:sigma_with_T}
    \sigma^2 = \ell' \Sigma \ell = z'T z = \sum_{j=1}^p z_j^2 + 2\sum_{i<j} T_{ij} z_i z_j.
\end{align}

This representation is useful because the sign of $T_{ij}$ has a direct interpretation. 
When $T_{ij}>0$, the estimation errors associated with moments $i$ and $j$ are effectively aligned in the direction relevant for $\varphi(\theta)$, increasing the variance. When $T_{ij}<0$, they are effectively offsetting each other, decreasing the variance. Thus, restrictions that prevent $T_{ij}$ from being very negative limit cancellation and tend to raise the lower bound, while restrictions that prevent $T_{ij}$ from being very positive limit alignment and tend to lower the upper bound.

This distinction is important because the raw correlation $R_{ij}$ need not have the same interpretation as the effective correlation $T_{ij}$. If $\ell_i$ and $\ell_j$ have the same sign, then $T_{ij}=R_{ij}$. If they have opposite signs, then $T_{ij}=-R_{ij}$. Thus, a positive raw correlation may induce cancellation when the two moments enter the scalar target with opposite signs.

Let $\mathcal T$ denote a nonempty set of admissible sign-adjusted correlation matrices. For example, $\mathcal T$ may encode diagonal-only information, known zero correlations, sign restrictions, magnitude restrictions, block restrictions, or any combination of these. Define the lower and upper feasible standard errors associated with $\mathcal T$ by
\begin{align*}
    \underline{\sigma}(\mathcal T)
        =
        \left(
        \inf_{T\in\mathcal T} z'Tz
        \right)^{1/2} 
        \qquad\text{and}\qquad
        \overline{\sigma}(\mathcal T)
        =
        \left(
        \sup_{T\in\mathcal T} z'Tz
        \right)^{1/2},
\end{align*}
respectively.
When $\mathcal T$ is the full set of correlation matrices, these endpoints coincide with the diagonal-only bounds in Theorem \ref{thm:explicit}.
This notation makes the role of covariance information explicit. If $\mathcal T_2\subseteq\mathcal T_1$, then
\begin{align*}
\underline{\sigma}(\mathcal T_1) \leq
\underline{\sigma}(\mathcal T_2)
\qquad\text{and}\qquad
\overline{\sigma}(\mathcal T_2)\leq \overline{\sigma}(\mathcal T_1).
\end{align*}
That is, additional covariance information weakly contracts the feasible range of standard errors.
The rest of this section studies which empirical forms of covariance information shrink $\mathcal T$ in ways that are informative for the lower bound, the upper bound, or both.

\subsection{How covariance information tightens the bounds}

The representation in \eqref{eq:sigma_with_T} makes clear which types of covariance information affect which
endpoint of the range. Negative values of $T_{ij}$ generate cancellation, while positive values generate alignment. Consequently, restrictions that bound $T_{ij}$ away from $-1$ limit cancellation and can
raise the lower bound. Restrictions that bound $T_{ij}$ away from $1$ limit alignment and
can lower the upper bound.
A useful general form of pairwise covariance information is 
\begin{align*}
    \underline{\rho}_{ij}\leq T_{ij}\leq \overline{\rho}_{ij},
\end{align*}
where the lower endpoint $\underline{\rho}_{ij}$ restricts the amount of cancellation available between moments $i$ and $j$, while the upper endpoint $\overline{\rho}_{ij}$ restricts the amount of alignment available between them. Thus, a one-sided lower restriction is primarily informative for the best-case standard error, a one-sided upper restriction is primarily informative for the worst-case standard error, and a two-sided restriction may tighten both.

\subsection{Empirical sources of covariance information}
We discuss several empirically motivated restrictions.

Sign restrictions are an important special case. If $T_{ij}\ge0$, then the pair $(i,j)$ cannot contribute to cancellation, although it may still contribute to alignment. Such information is therefore informative for the lower bound. If $T_{ij}\le0$, then the pair cannot contribute to alignment, although it may still contribute to cancellation; this is informative for the upper bound. If $T_{ij}=0$, then the pair contributes neither to cancellation nor to alignment, and both endpoints may tighten. These restrictions are stated in terms of the sign-adjusted correlation $T$, not the raw correlation $R$; the two coincide only when $\ell_i$ and $\ell_j$ have the same sign.

These restrictions often arise from the design of a combined-data problem. If two moments are estimated from independent samples, the corresponding covariance is zero and hence $T_{ij}=0$. If moments can be partitioned into independent blocks, then all cross-block entries of $T$ are zero, while dependence may remain unrestricted within each block. If moments are estimated from partially overlapping samples, then the overlap structure may imply a magnitude bound of the form
\begin{align*}
    |T_{ij}|\leq q_{ij},
\end{align*}
where $q_{ij}<1$ summarizes the maximal cross-source dependence allowed by the sampling
design. More generally, weak dependence across markets, cohorts, time periods, or data
sources can be represented by similar bounds on cross-block entries of $T$.

In a few special cases, these restrictions yield closed-form bounds. For example, with two moments and $T_{12}\in[\underline{\rho},\overline{\rho}]$, we have
\begin{align*}
    \underline{\sigma}
        =
        \left(z_1^2+z_2^2+2\underline{\rho}z_1z_2\right)^{1/2}
        \qquad\text{and}\qquad
        \overline{\sigma}
        =
        \left(z_1^2+z_2^2+2\overline{\rho}z_1z_2\right)^{1/2}.
\end{align*}
This formula nests independence, sign restrictions, and bounded-dependence restrictions. For instance, independence corresponds to $\underline{\rho}=\overline{\rho}=0$, while $|T_{12}|\le q$ corresponds to
$[\underline{\rho},\overline{\rho}]=[-q,q]$.

Another tractable case is block independence. Suppose that the moments are partitioned into blocks, cross-block correlations are zero, and within each block no off-diagonal information is available. Then, the diagonal-only formulas of Theorem~\ref{thm:explicit} apply within each block, and the resulting block-level variances add across blocks. Thus, block information rules out cross-block cancellation and alignment while preserving the diagonal-only benchmark within blocks.

\subsection{Illustration: zero cancellation with three moments}
The preceding discussion concerns the full range of feasible standard errors. It is also useful to visualize when additional covariance information rules out exact cancellation, that is, when it rules out a zero lower bound. Consider the case of three moments and normalize the marginal contributions by  $q_j=z_j/(z_1+z_2+z_3)$. Each point in the simplex represents the relative importance of the three marginal
contributions. Under diagonal-only information, exact cancellation is feasible if and only if $\max_j q_j\le1/2$. 
If exact cancellation occurs, then the sign-adjusted correlations must satisfy $Tq=0$, where $$T = \left( \begin{array}{ccc}
     1& t_{12}&t_{13} \\
     t_{12}&1&t_{23}\\
     t_{13}&t_{23}&1
\end{array}\right)
\qquad\text{and}\qquad
q=\left(\begin{array}{c}q_1 \\ q_2 \\ q_3\end{array}\right).$$
Solving this linear system gives the following pairwise correlations:
\begin{align*}
    T_{ij}^{0}(q)
        =\frac{q_k^2-q_i^2-q_j^2}{2q_iq_j} \qquad \text{for all } \{i,j,k\}=\{1,2,3\}.
\end{align*}
If the researcher imposes $T_{ij}\ge-\gamma$ for all $i,j$, then exact cancellation requires
\begin{align*}
    \max_j q_j\leq \frac{1}{2} \quad\text{and}\quad T_{ij}^{0}(q)\geq-\gamma \quad\text{for all } i\neq j.
\end{align*}
As $\gamma$ decreases, this zero-cancellation region shrinks. Figure \ref{fig:zero_lb_regions} plots this
region for different values of $\gamma$. The simplex describes the relative sizes of the three marginal contributions. 
\begin{figure}[tb!]
    \begin{center}
    \includegraphics[width=\linewidth]{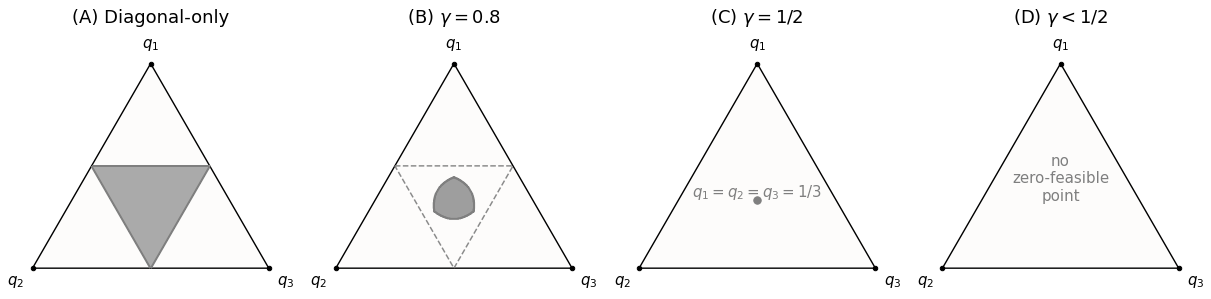}
    \caption{Zero-cancellation regions under bounded negative effective correlation}\label{fig:zero_lb_regions}      
    \end{center}
{\footnotesize {Note:} Zero-cancellation regions in the three-moment case. Each point in the simplex corresponds to normalized marginal contributions $q_j=z_j/\sum_k z_k$. The shaded region contains the values of $q$ for which exact cancellation is feasible, and hence the sharp lower bound on the standard error is zero. Outside the shaded region, exact cancellation is ruled out and the lower bound is strictly positive. The first panel corresponds to diagonal-only information. The remaining panels impose $T_{ij}\ge -\gamma$ for all pairs, which limits the amount of negative effective correlation. As $\gamma$ decreases, the zero-cancellation region shrinks. The figure displays zero feasibility, not the full range of feasible standard errors.}
\end{figure}
Shaded points are those for which the available covariance information still permits exact cancellation. Unshaded points are those for which the zero lower bound is no longer feasible.

These examples illustrate the general mechanism. Lower-bound informativeness comes from ruling out the negative effective correlations required for cancellation. Upper-bound informativeness comes from ruling out the positive effective correlations required for alignment. The next section describes how these restrictions can be imposed computationally when closed-form expressions are not available.

\section{Semidefinite Program}\label{sec:SDP}

The previous section shows that a researcher can obtain explicit sharp bounds when the values of only the diagonal elements of $\Sigma$ are known.
In practice, however, there are situations where the researcher also possesses partial information about the off-diagonal elements of $\Sigma$.
For example, block-diagonal elements of $\Sigma$ are known when multiple moment conditions are constructed from each data set.

We show that, in such general settings, the problem of obtaining bounds for the standard error can be reformulated as an equivalent semidefinite program (SDP), which can be readily implemented using existing numerical solvers.
Since \citet{mikkel} analyzes worst-case standard errors, we focus hereafter on the best-case standard errors.


\subsection{The Best-Case Standard Errors}\label{bcs:derivation}

Let $\mathcal{I} \subset \mathbb{N}^2$ denote the set of row--column index pairs of the variance estimate $\hat{\Sigma}$ for which the researcher possesses information. 
For example, if the researcher knows only the diagonal elements of $\hat{\Sigma}$, then 
$\mathcal{I} = \{(1,1), \cdots, (p,p)\}$. 
If, in addition, the researcher knows certain off-diagonal elements $\hat{\Sigma}_{jk}$ and $\hat{\Sigma}_{kj}$, then the corresponding index pairs $(j,k)$ and $(k,j)$ are also included in $\mathcal{I}$.
Define $S(\hat{\Sigma}, \mathcal{I})$ as the set of all symmetric, positive semidefinite matrices that share with $\hat{\Sigma}$ the entries corresponding to the index pairs in $\mathcal{I}$, that is,
\[
S(\hat{\Sigma}, \mathcal{I})
\equiv
\left\{
\Tilde{\Sigma} \succeq 0 \ :\ 
\Tilde{\Sigma} = \Tilde{\Sigma}', \ 
\Tilde{\Sigma}_{ij} = \hat{\Sigma}_{ij} \text{ for all } (i,j) \in \mathcal{I}
\right\},
\]
where $A \succeq 0$ indicates that the matrix $A$ is positive semidefinite.

With the above notation, our problem now concerns the computation of $n^{-1/2}$ times
\begin{equation}\label{eq:pre_sdp_opt}
\min_{\Tilde{\Sigma} \in S(\hat\Sigma,\mathcal{I})} \sqrt{\hat{\ell}'\Tilde{\Sigma}\hat{\ell}},
\end{equation}
To formally motivate this minimization problem, we state the following assumption.
\begin{asm}\label{asm:sufficient2}
(i) $(j,j) \in \mathcal{I}$ for all $j \in \{1,\cdots,p\}$.
(ii)
$
\hat{\ell}_j \to \ell_j
$    
for all $j \in \{1,\cdots,p\}$,
and
$
\hat{\Sigma}_{jk} \to \Sigma_{jk}
$    
for all $(j,k) \in \mathcal{I}$. 
\end{asm}

\noindent
Part~(i) requires that all diagonal elements of $\hat{\Sigma}$ are known, consistent with the setting in the previous section. 
Part~(ii) parallels Assumption~\ref{asm:sufficient} and is extended to the current framework. 

The following proposition formally justifies \eqref{eq:pre_sdp_opt} as the main object of interest.

\begin{prop}\label{prop:consistency}
Under Assumption \ref{asm:sufficient2}, Equation \eqref{eq:pre_sdp_opt} is consistent for $\min_{\Tilde\Sigma \in S(\Sigma,\mathcal{I})} \sqrt{\ell'\Tilde\Sigma\ell}$.
\end{prop}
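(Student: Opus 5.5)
We need to show that $V(\ell,\Sigma_{\mathcal I}) := \min_{\tilde\Sigma\in S(\Sigma,\mathcal I)} \ell'\tilde\Sigma\ell$ is continuous at the true $(\ell,\Sigma_{\mathcal I})$, where $\Sigma_{\mathcal I}$ denotes the entries indexed by $\mathcal I$. The consistency claim then follows from Assumption \ref{asm:sufficient2}(ii) and the continuous mapping theorem, since the square root is continuous on $[0,\infty)$. The plan is to reparametrize in correlation form, in the spirit of \eqref{eq:sigma_with_T}, so that the feasible set no longer moves with the data.

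\textbf{Step 1 (compactness and existence).} By Assumption \ref{asm:sufficient2}(i) every diagonal entry is pinned down. For any PSD $\tilde\Sigma$ we have $|\tilde\Sigma_{jk}|\le \tilde\Sigma_{jj}^{1/2}\tilde\Sigma_{kk}^{1/2}$, so $S(\hat\Sigma,\mathcal I)$ is bounded. It is closed as an intersection of the PSD cone with affine constraints, hence compact. It is nonempty whenever the prescribed entries admit a PSD completion; at the truth $\Sigma$ itself is a completion, and I will argue below that the estimated problem is feasible with probability approaching one. The objective is linear in $\tilde\Sigma$, so the minimum is attained.

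\textbf{Step 2 (upper semicontinuity).} Write any feasible point as $\tilde\Sigma=D R D$, where $D=\mathrm{diag}(s_j)$ and $R$ is a correlation matrix. Assume first that all $s_j>0$. Off-diagonal restrictions $\tilde\Sigma_{jk}=\hat\Sigma_{jk}$ then become $R_{jk}=\hat\Sigma_{jk}/(\hat s_j\hat s_k)$. Take an optimal true correlation matrix $R^*$ and perturb it to a feasible $\hat R$ as follows:
\[
\hat R=(1-\epsilon_n)R^\dagger+\epsilon_n I,
\]
where $R^\dagger$ equals $R^*$ with the known entries replaced by the estimated correlations, and $\epsilon_n\to0$ is chosen so that $\hat R\succeq0$. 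This works because the replacement error is $o_p(1)$ in operator norm. Some care is needed when $R^*$ is singular: shrinking toward $I$ changes the known entries. So I would instead use the convex combination of the estimated entries with the true ones and exploit the fact that the feasible set in correlation coordinates varies continuously. The cleaner route is Step 4.

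\textbf{Step 3 (lower semicontinuity).} Take estimated minimizers $\hat\Sigma^*$. They lie in a uniformly bounded set, so along any subsequence they converge to some $\bar\Sigma$. That limit satisfies $\bar\Sigma\succeq0$ and matches $\Sigma$ on $\mathcal I$, so $\bar\Sigma\in S(\Sigma,\mathcal I)$. Hence
\[
\liminf \hat\ell'\hat\Sigma^*\hat\ell\ \ge\ V(\ell,\Sigma_{\mathcal I}).
\]
This step is routine via Berge's maximum theorem, given that the constraint correspondence has a closed graph with compact values.

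\textbf{Step 4 (main obstacle).} The hard part is inner semicontinuity of the constraint correspondence $\Sigma_{\mathcal I}\mapsto S(\Sigma,\mathcal I)$, which is needed both for upper semicontinuity and for feasibility of the estimated problem. If the true partial matrix sits on the boundary of the set of completable patterns, small perturbations of $\hat\Sigma_{\mathcal I}$ can make $S(\hat\Sigma,\mathcal I)$ empty or cause it to jump. The first approach I would try is to interpret the estimated problem as one defined on the closure and use the dual SDP. The dual is
\[
\max_{\lambda}\ \sum_{(i,j)\in\mathcal I}\lambda_{ij}\hat\Sigma_{ij}\quad\text{s.t.}\quad \ell\ell'-\textstyle\sum\lambda_{ij}E_{ij}\succeq0 .
\]
Its objective is continuous in the data, and strong duality holds when the dual has a Slater point. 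If that fails, I would add a regularity condition: the true $\Sigma$ restricted to $\mathcal I$ admits a positive definite completion (which holds, for example, in the diagonal-only case, since $I$ is a valid correlation matrix). Under that condition the strict-feasibility argument of Step 2 goes through directly, because a positive definite completion tolerates small perturbations of the prescribed entries while remaining PSD.
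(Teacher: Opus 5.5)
Your route differs from the paper's, and the difficulty you isolate in Step 4 is one the paper's proof does not actually resolve.

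\textbf{The paper's route and where it fails.} The paper never deals with a moving feasible set. It splits $\mathcal I=\mathcal I_D\cup\mathcal I_O$ and writes $\ell'\tilde\Sigma\ell=l'Rl+\sum_{(j,k)\in\mathcal I_O}\ell_j\Sigma_{jk}\ell_k$ with $l=(\ell_1s_1,\dots,\ell_ps_p)'$. It then asserts that $R$ ranges over the \emph{fixed} compact set $\mathcal C^p(\mathcal I)$ of PSD unit-diagonal matrices with zeros on $\mathcal I_O$. Continuity of $\varphi(l)=\min_{R\in\mathcal C^p(\mathcal I)}l'Rl$ then follows from Berge's theorem with a constant constraint set, and the continuous mapping theorem finishes.

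When $\mathcal I_O=\emptyset$ this is valid. In correlation coordinates the feasible set is the data-independent elliptope, so both semicontinuities are immediate and no regularity condition is needed.

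When $\mathcal I_O\neq\emptyset$, however, $R=D^{-1}\tilde\Sigma D^{-1}-K$, where $K$ carries the known correlations. The constraint $\tilde\Sigma\succeq0$ means $R+K\succeq0$, which is not equivalent to $R\succeq0$. For example, take $p=3$, unit variances, known $\rho_{12}=0.9$ and $l=(1,1,1)'$. The true minimum is $4.8-4\sqrt{0.95}\approx0.90$, attained at $\rho_{13}=\rho_{23}=-\sqrt{0.95}$. The paper's reduction instead gives $\varphi(l)+1.8=4.8-2\sqrt2\approx1.97$. So the dependence of the feasible set on $\hat\Sigma_{\mathcal I}$ that you flag is real, and the fixed-set device does not remove it.

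\textbf{Your regularity condition is needed.} Some condition of the kind you propose is unavoidable for the statement as written. Take $p=2$, $\mathcal I$ containing all entries, $\Sigma$ the $2\times2$ all-ones matrix, $\hat\Sigma_{jj}=1$ and $\hat\Sigma_{12}=1+1/n$. Assumption~\ref{asm:sufficient2} holds, yet $S(\hat\Sigma,\mathcal I)=\emptyset$ for every $n$.

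\textbf{Fixing Step 2.} Under your positive-definite-completion condition the argument goes through, but Step 2 as written should be dropped. Shrinking toward $I$ moves the known entries, and ``the feasible set varies continuously'' is precisely what has to be shown. Replace it with the following.
\begin{enumerate}
\item Let $R_0\succ0$ be a completion of the true partial correlations and $R^*$ a true minimizer.
\item Then $R_t=(1-t)R^*+tR_0$ is feasible at the truth, with $\lambda_{\min}(R_t)\ge t\lambda_{\min}(R_0)$.
\item For partial data near the truth, overwriting the known entries of $R_t$ is a perturbation whose operator norm is of the order of the data perturbation. So the result stays positive definite.
\item This gives feasibility in a neighbourhood of the truth. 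It also gives $\limsup$ of the value function at most $V+t\,l'R_0l$.
\item Combine this with your Step 3 (closed graph plus uniform boundedness gives lower semicontinuity), let $t\downarrow0$, and apply the continuous mapping theorem.
\end{enumerate}

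Your condition is sufficient rather than necessary. Diagonal-only information needs nothing. For two-block patterns with PSD block estimates, one can transport $B_1^{1/2}KB_2^{1/2}\mapsto\hat B_1^{1/2}K\hat B_2^{1/2}$ even when a block is singular. It is nonetheless the natural assumption to add to the proposition.
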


\noindent
Proof of this statement is found in Appendix \ref{sec:prop:consistency}.

Furthermore, the following proposition establishes that the constrained optimization problem \eqref{eq:pre_sdp_opt} can be reformulated as a numerically feasible SDP.

\begin{prop}\label{prop_sdp_solution}
The solution to the optimization problem \eqref{eq:pre_sdp_opt} is also the solution to the SDP:
        \begin{align}
        \min_{\Omega} \ & \hat{\ell}' \hat D\Omega \hat D \hat{\ell} \label{sdp}\\
        \text{s.t. } &\Omega \succeq 0,\label{sdp_constraint}\\\
        &\Omega_{jk} = \hat{\rho}_{jk} \ \forall (j,k) \in \mathcal{I},\label{sdp_affine_corr}
        \end{align} 
where $\hat D \equiv \mathrm{Diag}(\hat{\Sigma})^{\frac{1}{2}}$ and $\hat\rho_{jk} \equiv \hat\Sigma_{jk} \left/ \sqrt{\hat\Sigma_{jj} \hat\Sigma_{kk}} \right.$ for all $(j,k) \in \mathcal{I}$.
\end{prop}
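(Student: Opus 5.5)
The idea is to show that the map $\Omega \mapsto \hat D\Omega\hat D$ is a bijection between the feasible set of the SDP \eqref{sdp}--\eqref{sdp_affine_corr} and the set $S(\hat\Sigma,\mathcal{I})$. It also preserves the objective up to the monotone transformation $x\mapsto\sqrt{x}$. Throughout, assume $\hat\Sigma_{jj}>0$ for all $j$, so that $\hat D$ is invertible and $\hat\rho_{jk}$ is well defined. If some $\hat\Sigma_{jj}=0$, the corresponding row and column of any $\tilde\Sigma\succeq 0$ vanish, and that moment can be dropped from both problems. By Assumption~\ref{asm:sufficient2}(i), all diagonal pairs $(j,j)$ lie in $\mathcal{I}$, so \eqref{sdp_affine_corr} forces $\Omega_{jj}=\hat\rho_{jj}=1$. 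Hence the SDP feasible set consists of correlation matrices whose entries on $\mathcal{I}$ equal $\hat\rho_{jk}$.

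\textbf{Forward direction.} Take $\Omega$ feasible for the SDP and set $\tilde\Sigma=\hat D\Omega\hat D$. Then $\tilde\Sigma$ is symmetric, and it is positive semidefinite because congruence preserves semidefiniteness: $x'\tilde\Sigma x=(\hat Dx)'\Omega(\hat Dx)\ge 0$. For $(j,k)\in\mathcal{I}$,
\[
\tilde\Sigma_{jk}=\hat\Sigma_{jj}^{1/2}\hat\Sigma_{kk}^{1/2}\hat\rho_{jk}=\hat\Sigma_{jk},
\]
so $\tilde\Sigma\in S(\hat\Sigma,\mathcal{I})$.

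\textbf{Reverse direction.} Take $\tilde\Sigma\in S(\hat\Sigma,\mathcal{I})$ and set $\Omega=\hat D^{-1}\tilde\Sigma\hat D^{-1}$. This $\Omega$ is symmetric and PSD by the same congruence argument, and its entries on $\mathcal{I}$ are $\hat\Sigma_{jk}/\sqrt{\hat\Sigma_{jj}\hat\Sigma_{kk}}=\hat\rho_{jk}$. Here it matters that the diagonal of $\tilde\Sigma$ equals that of $\hat\Sigma$, which is what makes the normalization by $\hat D$, rather than by $\tilde\Sigma$'s own diagonal, correct. The two maps are mutual inverses, so this is a bijection of feasible sets.

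\textbf{Objective.} Under the bijection, $\hat\ell'\tilde\Sigma\hat\ell=\hat\ell'\hat D\Omega\hat D\hat\ell$, and this quantity is nonnegative. Since $\sqrt{\cdot}$ is strictly increasing on $[0,\infty)$, the minimizers of \eqref{eq:pre_sdp_opt} and of \eqref{sdp} correspond exactly under the bijection. The optimal value of \eqref{eq:pre_sdp_opt} is the square root of the SDP value.

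It remains to check that the SDP attains its minimum, so that "the solution" is meaningful. The SDP feasible set is nonempty, since it contains $\hat D^{-1}\hat\Sigma\hat D^{-1}$, assuming $\hat\Sigma\succeq 0$. It is closed, being the intersection of the PSD cone with an affine set. It is bounded, because PSD matrices with unit diagonal have $|\Omega_{jk}|\le 1$. The objective is linear in $\Omega$, so the minimum is attained by compactness.

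Finally, the problem is an SDP in standard form: a linear objective in $\Omega$, a PSD cone constraint, and affine equality constraints. I expect no real obstacle. The only delicate points are the degenerate case $\hat\Sigma_{jj}=0$ and the need for Assumption~\ref{asm:sufficient2}(i), so that the rescaling by the fixed $\hat D$ matches the diagonal of every feasible $\tilde\Sigma$.
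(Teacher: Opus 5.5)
Your proposal is correct and follows essentially the same route as the paper: rescale by $\hat D$ to pass between feasible covariance matrices in $S(\hat\Sigma,\mathcal{I})$ and feasible correlation matrices $\Omega$, then use the monotonicity of $\sqrt{\cdot}$ so that minimizers correspond via $\tilde\Sigma=\hat D\Omega\hat D$. Your version is more careful than the paper's, making explicit the two-sided bijection, the role of Assumption~\ref{asm:sufficient2}(i) in pinning down the diagonal, the zero-variance case, and attainment of the minimum by compactness (the paper instead only appeals to convexity of the SDP).
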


\noindent
Proof of this statement is found in Appendix \ref{sec:appendix1.1}.

\subsection{Implementation}\label{sec:implementation}

With Proposition \ref{prop_sdp_solution}, we can obtain the best-case standard error \eqref{eq:pre_sdp_opt} by solving the SDP \eqref{sdp}--\eqref{sdp_affine_corr} after computing $\hat\ell$, $\hat\Sigma$, and $\hat{\rho}_{ij}$ for every pair of moments $i$ and $j$ whose correlation we can estimate. 

Although reliable SDP solvers are available in virtually every computing environment,\footnote{For this paper, the preferred language was Python, and the SDP solver employed was the function provided in the package \texttt{CVXOPT}.} numerical optimization may occasionally yield undesired results. In particular, depending on the numerical properties of the problem or the available computational resources, applying an SDP solver to \eqref{sdp}--\eqref{sdp_affine_corr} may produce an $\Omega$ that is not exactly a correlation matrix. For instance, the diagonal elements of $\Omega$ may deviate from $1$, or the off-diagonal elements $\Omega_{j,k}$ may fall outside the interval $[-1,1]$.

Nonetheless, such violations occur only at a negligible scale,\footnote{For example, one of the diagonal entries may take the value $\Omega_{j,j} = 0.9999999$, as observed in the initial computations of the best-case standard errors for the parameter $n$ of the \citet{alvarez&lippi:2014} menu cost model.} and are more likely attributable to limitations of the software than to any substantive theoretical flaw. To address this issue, a researcher may round the entries of the resulting correlation matrix, which yields virtually identical standard errors for the estimated parameters. Rather than directly rounding the entries of the $\Omega$ produced by the SDP solver, the researcher can proceed as follows: first, run the SDP; then, if the solution does not correspond to a valid correlation matrix, apply the following alternative method to construct a correlation matrix that satisfies \eqref{sdp}--\eqref{sdp_affine_corr}:

\begin{enumerate}
    \item Compute the eigenvalue decomposition of the initial correlation matrix output $\Omega_{0}$. Letting $\Lambda_{0}$ be the diagonal matrix of eigenvalues and $E_{0}$ be the $p \times p$ matrix of corresponding eigenvectors, we have 
    $\Omega_{0} = E_{0}\Lambda_{0}E_{0}.$
    \item Round each eigenvalue up such that negative eigenvalues become $0$.\footnote{Eigenvalues such as $-0.000000398$ (taken from the correlation matrix referred to in the previous footnote), for instance, are rounded to $0$. From the applications done in this paper, the numerical optimization outputs correlation matrices with the aforementioned minor violations because the corresponding eigenvalues of the produced matrix are not exactly non-negative but effectively 0.} That is, set 
    $\Lambda_{1,j} = \max\{\Lambda_{0,j},0\}$ for each $j = 1,...,p$ and
    $E_{1} = E_{0}$.
    \item Using $\Lambda_{1}$ and $E_{1}$, recompose the correlation matrix $\Omega_{1} = E_{1}\Lambda_{1}E_{1}$.
    \item Compute $\Omega_{2} = L \Omega_{1} L$, where $L_{j,j} = 1/\sqrt{\Omega_{1,{j,j}}}$. 
    \item With $\Omega_{2}$ as an initial value, run SDP again. 
\end{enumerate}

The above steps address the core numerical optimization issue: the initial correlation matrix output is not exactly positive semidefinite because it contains negative eigenvalues. Since positive semidefiniteness requires all eigenvalues to be non-negative, these negative eigenvalues can be rounded to zero, as they are effectively negligible. Such rounding does not affect the standard errors implied by the correlation matrix. Steps 3, 4, and 5 then provide an additional robustness check to ensure that, after decomposition and rounding, the resulting correlation matrix remains valid and optimal.

\subsection{The Karush–Kuhn–Tucker Conditions}\label{sec:kkt}
As the initial outputs of SDP solvers may require adjustment to yield valid correlation matrices, one might worry whether the adjusted correlation matrix obtained via the implementation guidelines in the previous subsection indeed corresponds to the true lower bound of the standard errors.

A natural way to verify the robustness of this approach is to check whether the Karush--Kuhn--Tucker (KKT) conditions are satisfied by the adjusted correlation matrix. The KKT conditions generalize the method of Lagrange multipliers by providing optimality criteria for problems with both equality and inequality constraints. Commonly employed in nonlinear programming, the KKT conditions involve solutions to both the primal problem (the original formulation of the optimization problem) and the dual problem (an alternative but equivalent formulation) to determine whether a global optimum has been attained. When the primal problem is convex, the KKT conditions are both necessary and sufficient for primal and dual optimality, thereby guaranteeing that the solution is globally optimal \citep{boyd}.

By checking the KKT conditions, researchers can confirm that the implied standard errors of the adjusted best-case correlation matrices for particular model parameters are indeed the smallest feasible standard errors. Since most SDP solvers simultaneously compute both primal and dual solutions, verifying the KKT conditions requires only straightforward computations involving these solutions, the known marginal variances of the empirical moments, and the Jacobian of the structural model evaluated at the point estimates of the chosen moments. Proposition \ref{prop_sdp_kkt} below specifies the conditions that can be used to verify the optimality of adjusted correlation matrices.

\begin{prop}
    \label{prop_sdp_kkt}
    Given an adjusted $p \times p$ correlation matrix $\Omega$ with negative eigenvalues rounded to zero and the dual variables for the positive semidefiniteness constraint $\Lambda$, a $p \times p$ matrix, and for the equality constraints, $\Psi \in \mathbb{R}^{p}$, of the SDP \eqref{sdp}--\eqref{sdp_affine_corr}, $\Omega$ solves \eqref{eq:pre_sdp_opt} if and only if $(\Omega,\Lambda)$ satisfies the following KKT conditions: 
    \begin{itemize}
        \item Primal Feasibility: $\Omega \succeq 0$ and $\Omega_{j,j} = 1$ for each $j = 1,...,p$.
        \item Dual Feasibility: $\Lambda \succeq 0$. 
        \item Complementary Slackness: $\Lambda\Omega = \mathbf{0}_{p \times p}$.
        \item Stationarity: $D\hat{\ell}\hat{\ell}'D + \Lambda - \mathrm{diag}(\Psi) = \mathbf{0}_{p \times p}$.
    \end{itemize}
\end{prop}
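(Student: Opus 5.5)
The proof will be the standard convex-duality argument for the SDP \eqref{sdp}--\eqref{sdp_affine_corr} with $\mathcal I$ the diagonal, so the only constraints are $\Omega\succeq 0$ and $\Omega_{jj}=1$. Write $C\equiv \hat D\hat\ell\hat\ell'\hat D\succeq 0$ (the $D$ in the stationarity condition is $\hat D$). The objective is then the linear function $\operatorname{tr}(C\Omega)$. First I will fix the sign convention, which is where the difficulty lies. The Lagrangian for the \emph{minimization} is
\[
L(\Omega,\Lambda,\Psi)=\operatorname{tr}(C\Omega)-\operatorname{tr}(\Lambda\Omega)-\sum_j\Psi_j(\Omega_{jj}-1),\qquad \Lambda\succeq 0 .
\]
Its stationarity condition is $C-\Lambda-\operatorname{diag}(\Psi)=0$. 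With the literal sign $C+\Lambda-\operatorname{diag}(\Psi)=0$ and $\Lambda\succeq 0$, the same argument certifies a \emph{maximizer}.

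A check with $p=2$, $\hat D=I$, $\hat\ell=(1,1)'$ confirms this. Take $\Omega=\mathbf 1\mathbf 1'$, $\Psi=(2,2)'$ and $\Lambda=\operatorname{diag}(\Psi)-C=\bigl(\begin{smallmatrix}1&-1\\-1&1\end{smallmatrix}\bigr)$. Then $\Lambda\succeq0$ and $\Lambda\Omega=0$, so all four conditions hold as written. Yet this $\Omega$ is the worst-case correlation of Proposition~\ref{prop_structure_location}, not the best case. I will therefore prove the statement with $\Lambda$ understood as the multiplier of the minimization, that is, with stationarity $C-\Lambda-\operatorname{diag}(\Psi)=0$. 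This is equivalent to reading the displayed $+\Lambda$ with $\Lambda\preceq0$. I will remark that under the literal signs the identical proof shows $\Omega$ attains $\overline\sigma$ instead.

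\emph{Sufficiency.} Suppose $(\Omega,\Lambda,\Psi)$ satisfies the corrected conditions, and let $\Omega'$ be any feasible matrix. Using $C=\Lambda+\operatorname{diag}(\Psi)$ and $\Omega'_{jj}=1$,
\[
\operatorname{tr}(C\Omega')=\operatorname{tr}(\Lambda\Omega')+\textstyle\sum_j\Psi_j\ \ge\ \sum_j\Psi_j .
\]
The inequality holds because the trace of a product of two PSD matrices is nonnegative. At $\Omega$ itself, complementary slackness $\Lambda\Omega=0$ gives $\operatorname{tr}(C\Omega)=\sum_j\Psi_j$. Hence $\Omega$ minimizes the SDP, and by Proposition~\ref{prop_sdp_solution} it solves \eqref{eq:pre_sdp_opt}, because the square root is monotone.

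\emph{Necessity.} This step needs strong duality with dual attainment, and I expect it to be the main obstacle. The dual of the SDP is
\[
\max_{\Psi}\ \textstyle\sum_j\Psi_j \quad\text{s.t.}\quad C-\operatorname{diag}(\Psi)\succeq0 .
\]
Slater's condition holds on the primal side, since $\Omega=I$ is strictly feasible ($I\succ0$, unit diagonal). The dual is also strictly feasible: take $\Psi_j$ very negative. The primal feasible set (the elliptope $C$ of Proposition~\ref{prop_structure_location}) is compact, so both optima are attained and the duality gap is zero \citep{boyd}.

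Now let $\Omega$ be optimal and $\Psi$ dual optimal, and set $\Lambda=C-\operatorname{diag}(\Psi)\succeq0$. A zero gap gives $\operatorname{tr}(\Lambda\Omega)=0$. Since $\Lambda,\Omega\succeq0$, we have $\operatorname{tr}(\Lambda\Omega)=\|\Lambda^{1/2}\Omega^{1/2}\|_F^2$, so $\Lambda^{1/2}\Omega^{1/2}=0$ and hence $\Lambda\Omega=0$. Primal feasibility is automatic for the eigenvalue-rounded and renormalized $\Omega$ of Section~\ref{sec:implementation}. This yields all four conditions and completes the equivalence.
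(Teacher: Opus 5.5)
Your proposal is correct, and your sign correction is substantive: the proposition as printed is false. Your $p=2$ example (with $\hat D=I$ and $\hat\ell=(1,1)'$) checks out, and the same example also defeats the ``only if'' direction. At the genuine best-case matrix $\Omega=vv'$ with $v=(1,-1)'$, the displayed stationarity gives $\Lambda=\operatorname{diag}(\Psi)-\hat\ell\hat\ell'$, so $\Lambda v=(\Psi_1,-\Psi_2)'$. Complementary slackness then forces $\Psi=0$, and hence $\Lambda=-\hat\ell\hat\ell'\not\succeq0$. So under the literal signs neither direction holds. Your reading, $\hat D\hat\ell\hat\ell'\hat D-\Lambda-\operatorname{diag}(\Psi)=0$ with $\Lambda\succeq0$, is the right repair.

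The paper's proof follows the same Lagrangian route, and the error enters at its first step. It forms $\mathcal L=\hat\ell'\hat D\Omega\hat D\hat\ell+\operatorname{tr}(\Lambda\Omega)+\sum_i\Psi_i(1-\Omega_{ii})$ with $\Lambda\succeq0$. This is at least the objective on the feasible set, so no lower bound on the primal value follows. The dual it then writes, $\max\sum_i\Psi_i$ subject to $\operatorname{diag}(\Psi)\succeq\hat D\hat\ell\hat\ell'\hat D$, is unbounded. Its feasible set is that of the dual of the worst-case problem, which is why the stated system certifies $\overline\sigma$. If you apply the paper's own auxiliary Claim correctly, dualizing only the diagonal constraints and requiring $\inf_{\Omega\succeq0}\operatorname{tr}\bigl((\hat D\hat\ell\hat\ell'\hat D-\operatorname{diag}(\Psi))\Omega\bigr)>-\infty$, you get exactly your sign.

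Beyond the sign, your argument supplies what the paper leaves to a citation of convexity:
\begin{enumerate}
\item the explicit weak-duality computation for sufficiency;
\item strong duality with dual attainment, from the Slater point $\Omega=I$ and compactness of the elliptope;
\item the passage from a zero gap to the matrix equation $\Lambda\Omega=0$ via $\operatorname{tr}(\Lambda\Omega)=\|\Lambda^{1/2}\Omega^{1/2}\|_F^2$. The paper never derives this step.
\end{enumerate}

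Two minor points. In the necessity direction you establish existence of suitable $(\Lambda,\Psi)$, equivalently that dual-optimal ones work. That is the only reading of ``given dual variables'' under which an equivalence can hold, so it is worth saying explicitly. Also, you use $C$ both for $\hat D\hat\ell\hat\ell'\hat D$ and for the elliptope, and these should be given different names.
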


\noindent
Proof of this statement is found in Appendix \ref{sec:appendix1.2}. 

The \textit{Primal Feasibility} condition requires that the adjusted correlation matrix $\Omega$ satisfies the generalized inequality and equality constraints of the original SDP, as specified in \eqref{sdp_constraint} and \eqref{sdp_affine_corr}. The \textit{Dual Feasibility} condition imposes that the dual variable associated with the positive semidefiniteness constraint is itself positive semidefinite, ensuring that the resulting matrix $\Omega$ is indeed positive semidefinite. The \textit{Complementary Slackness} condition requires that, for each constraint, exactly one of the following holds: (i) the constraint binds with equality, or (ii) the corresponding Lagrange multiplier equals zero. Finally, the \textit{Stationarity} condition, which is analogous to a first-order condition in unconstrained optimization, requires that the gradient of the Lagrangian with respect to $\Omega$ vanishes at the optimal $\Omega$.

\section{Empirical Applications}\label{sec:empirics}
In this section, we apply the SDP method described in Section~\ref{sec:implementation} to compute the best-case standard errors of three empirical applications: two calibrated macro models and an IV study of public-housing effects. We verify these estimates as optimal using the KKT conditions in Proposition \ref{prop_sdp_kkt}, and we compare the best-case standard errors to the worst-case standard errors.

\subsection{Menu Cost Price-Setting in Multiproduct Firm Model}\label{sec:empirics:menu}

The first application is based on \citet{alvarez&lippi:2014} and concerns the model of a multiproduct firm facing a price-setting decision in the context of sticky prices. Price rigidity stems from the need to pay a fixed ``menu'' cost prior to price adjustment. While the firm's desired prices change throughout time, it must weigh the benefits of changing all the prices of its products simultaneously against the loss it will incur in paying the menu cost. The $k = 3$ structural parameters of the model are the number of products of the firm ($m$), the volatility of the firm's desired prices ($\nu$), and the scaled menu cost relative to the curvature of the firm's profit function ($\sqrt{\psi/B}$).\footnote{$\psi$ denotes the menu cost while $B$ represents the curvature of the profit function.} These parameters are calibrated using $p = 4$ empirical moments consisting of the frequency of price changes, and the first, second, and fourth moments of price changes. 

\subsubsection*{Model Overview}
A firm sells $m$ products and sets the prices of each product simultaneously. The set prices remain fixed until the firm pays the menu cost $\psi$, and the firm's desired log prices change in continuous time following the process of $m$ independent random walks without drift and with volatility $\nu$. The price gap is defined as the difference between the firm's desired prices for their $m$ products and the actual prices of the products. Each period, the firm's profits are proportional to the sum of the squared price gaps. $B$ is a proportionality constant representing the loss of charging a price different from the desired level. Thus, $B$ measures the curvature of the profit function. The firm must then minimize the expected discounted cost of selling at prices that deviate from the desired levels and the loss from paying the fixed menu cost. A nonlinear transformation of the structural model parameters imply the observed empirical moments of the number of price adjustments per unit of time ($N_{a}$), the first moment of the absolute log price changes ($\mathbb{E}[|\Delta p_{i}|]$), the second moment ($\mathbb{E}[(\Delta p_{i})^{2}]$), and the fourth moment ($\mathbb{E}[(\Delta p_{i})^{4}]$). Below, the implied moment-matching relationship is rederived using Propositions 4 and 6, and Equation 10 of \citet{alvarez&lippi:2014}:

\begin{align}
    \label{eq:13}
    \begin{bmatrix}
            N_{a} \\
            \mathbb{E}[|\Delta p_{i}|] \\
            \mathbb{E}[(\Delta p_{i})^{2}] \\
            \mathbb{E}[(\Delta p_{i})^{4}]
        \end{bmatrix} = \begin{bmatrix}
            \frac{\nu \sqrt{\frac{m^{2}}{2m + 4}}}{\sqrt{\frac{\psi}{B}}}\\
            \frac{\sqrt{\bar{y}}}{\frac{m-1}{2}\text{Beta}\left(\frac{m-1}{2},\frac{1}{2}\right)}\\
            \frac{\nu\sqrt{2m + 4}}{m\sqrt{\frac{\psi}{B}}}\left(1 + \frac{2}{\pi}\left(1 + \frac{1}{2m}\right)\right)\\
            \frac{3m}{m + 2}\left[\frac{\nu\sqrt{2m + 4}}{m\sqrt{\frac{\psi}{B}}}\left(1 + \frac{2}{\pi}\left(1 + \frac{1}{2m}\right)\right)\right]^{2}
        \end{bmatrix}
\end{align}

\subsubsection*{Empirical Moments}

Following the data cleaning procedure of \citet{alvarezetal:2016}, we fit the calibrated model in Equation \hyperref[eq:13]{\eqref{eq:13}} using scanner data from a single store of the supermarket chain Dominick's for the years 1989 to 1994. For this application, the four empirical moments are computed with the movement data set for beer products.\footnote{This data set, entitled wber.csv, can be found on the Chicago Booth website (\url{https://www.chicagobooth.edu/research/kilts/datasets/dominicks}).} As in \citet{alvarezetal:2016}, we follow a four-step process to clean the data prior to estimation: (1) We drop the data on all stores except for store 122; (2) Any observations that correspond to prices below 20 cents or above 25 dollars are discarded; (3) Observations with absolute price changes less than one percent are set to zero; (4) The largest 1\% of absolute log price changes are also discarded. We use only data on beer products to simplify the inference exercise and make the results of the application easier to interpret. 

This process yields a data set of sample size $N = 37,916$ observations on weekly prices of 499 beer products. On average, each beer product has observations of around 76 weeks. Price changes are treated as i.i.d. processes across beer products and time in the calculation of standard errors. 

For the standard error computations, we use the four estimated empirical moments shown in Table 3 of \citet{mikkel}. We first perform full-information inference (i.e. using the readily estimable correlation structure computed by accessing the full data from a single source). Then, we find the worst-case and best-case standard errors simulating the scenario where we only have the diagonal elements or marginal variances of the estimated covariance matrix. Additionally, we calculate estimates under the erroneous yet commonly used in practice assumption that the empirical moments are mutually independent of one another (i.e. the off-diagonal elements of the covariance matrix are set to zero).

\subsubsection*{Results}

We estimate the structural model parameters using all four moments. The full-information estimates were computed using conventional full-information one-step estimation, and the limited-information best-case and worst-case estimates are calculated with the same
efficient weighting matrix suggested by the Section 3.2 algorithm of \citet{mikkel}. The estimates and standard errors implied by a procedure assuming mutually independent moments rely on an identity weighting matrix. The point estimates\footnote{The point estimates differ slightly across the full-information, limited-information, and independent methods since we use different weighting matrices for each approach. These minor discrepancies are discussed in more detail in Footnote 18 of \citet{mikkel}.} for the number of products, $\hat{m}$, the volatility of the desired prices of the multiproduct firm, $\hat{\nu}$, and the scaled menu cost with respect to the curvature of the firm's profit function $\widehat{\sqrt{\psi/B}}$ and their corresponding standard errors are reported for each framework in \hyperref[table:menucost]{Table \ref{table:menucost}}. 

The Full Information, Independent, and Worst-case $\bar{\sigma}_{\theta}^{CP}/\sqrt{n}$ entries of \hyperref[table:menucost]{Table \ref{table:menucost}} replicate the results of \citet{mikkel}. Our study's main contribution is the computation of the standard errors under the best-case framework, $\underline{\sigma}_{\theta}/\sqrt{n}$, and under the best-case framework with additional information on the correlation structure, $\underline{\sigma}_{\theta}^{\rho_{23}}/\sqrt{n}$. For the latter column, we invoke Proposition \ref{prop_sdp_solution} and make use of the fact that we can compute the correlation between the second and third moments of the data. This allows us to incorporate an additional constraint into the SDP and leverage information that tightens our computation of the best-case standard errors.

The best-case standard errors are substantially smaller than the worst-case standard errors. While the worst-case standard errors are already small, implying that the point estimates of the \citet{alvarez&lippi:2014} structural model parameters are statistically significant, we find that the best-case standard errors are notably smaller and effectively zero. However, when we make use of the estimable correlation between the third and fourth moments\footnote{In this setting, as we have access to the full data, we could have used the correlations between other pairings of moments. All other pairings gave similar results with the lower bounds of the standard errors for $\hat{\nu}$ and $\widehat{\sqrt{\psi/B}}$ not differing significantly from the baseline case. But, when using information on the correlation between the second and fourth moments, the best-case standard error for $\hat{m}$ takes on the value of 0.00394. On the other hand, utilizing the correlation between the third and fourth moments yields a lower bound of 0.0182 for the standard error of $\hat{m}$.}, the best-case standard errors for the parameter $\hat{m}$ increase to about 0.8\% of the point estimate from being effectively zero. 

\begin{table}[tb!]
    \centering
    \caption{Menu Cost: Parameter Estimates and Standard Error Computation}\vspace*{.2cm}
  \resizebox{\textwidth}{!}{
    \begin{tabular}{ccccccccc} \toprule
    \multirow{2}{*}{Parameter} & \multicolumn{2}{c}{Full Information} & \multicolumn{2}{c}{Independent} & \multicolumn{4}{c}{Bounds} \\
    \cmidrule(lr){2-3} \cmidrule(lr){4-5} \cmidrule(lr){6-9}
    & $\hat{\theta}$ & $\hat{\sigma}_{\theta}/\sqrt{n}$ & $\tilde{\theta}$ & $\hat{\sigma}_{\theta}/\sqrt{n}$ & $\check{\theta}$ & $\underline{\sigma}_{\theta}/\sqrt{n}$ & $\underline{\sigma}_{\theta}^{\rho_{23}}/\sqrt{n}$ & $\overline{\sigma}_{\theta}^{CP}/\sqrt{n}$ \\ \midrule
    number of products & 3.255 & 0.051 & 2.829 & 0.091 & 2.786 & 0.0000201 & 0.0221 & 0.148 \\
    Volatility & 0.089 & 0.001 & 0.090 & 0.000 & 0.090 & 0.000429 & 0.000453 & 0.001 \\
    Menu Cost & 0.305 & 0.003 & 0.280 & 0.006 & 0.278 & 0.000 & 0.000 & 0.011 \\ \bottomrule
    \end{tabular}}
    \label{table:menucost}\vspace{.2cm}
    \begin{minipage}{1\textwidth} 
        {\small Note: Full Information estimates use the complete covariance matrix $\hat{\Sigma}$; Independent estimates use only the diagonal elements of $\hat{\Sigma}$. 
        Bounds show best-case ($\underline{\sigma}_{\theta}$ and $\underline{\sigma}_{\theta}^{\rho_{23}}$) standard errors computed via SDP and worst-case ($\overline{\sigma}_{\theta}^{CP}$) standard errors following \cite{mikkel}.  The conventional weighting matrix is used for the full-information estimates, an identity weighting matrix is used for the independent estimates, and the efficient weighting matrix as in the Section 3.2 algorithm of \cite{mikkel} is used for the best-case and worst-case estimates. Parameters reported are the number of products ($\hat{m}$), Volatility ($\hat{\nu}$), and scaled menu cost ($\widehat{\sqrt{\psi/B}}$). Columns under a ``$\theta$'' denote the point estimates. \par}
    \end{minipage}
\end{table}

A few interesting comparisons can also be made with the other estimation frameworks. The next smallest set of standard errors can be found under the full information approach. The first row of \hyperref[table:menucost]{Table \ref{table:menucost}} shows that using the full underlying correlation structure for minimum distance inference leads to standard errors that are much closer to the best-case standard errors. 
This tells us that, for the number of products and Menu Cost parameters, the best-case standard errors consist of over half of the total weight of the full-information standard errors. Moreover, on average, the distance between the full information standard errors and the best-case standard errors is smaller than the distance between the full information standard errors and the worst-case standard errors by approximately 0.02. 

\subsection{Heterogeneous-Agent New Keynesian Model}

For the next application, we perform inference on the general equilibrium HANK macro model of \citet{auclertetal:2021} and \citet{mckayetal:2016} using impulse response function estimates from two separate studies: \citet{changetal:2014} and \citet{agrippino&ricco:2021}. The latter two papers provide independent empirical micro cross-sectional and macro time series moments. Consequently, the covariances among the moments computed from these two papers is not immediately available for full information minimum distance estimation. 

The aim of this application is to study the impacts of productivity and Monetary Policy shocks on aggregate outcomes. The HANK model relates the interactions among a continuum of heterogeneous households, monopolistically competitive firms facing a price adjustment cost, a government that gives out lump sum payments, and a central bank that sets the nominal interest rate. To characterize these interactions, this HANK inference setting concentrates on estimating $k = 7$ structural model parameters: the Taylor Rule coefficient on inflation that determines the central bank's nominal interest rate policy, the slope of the New Keynesian Phillips Curve, the two autoregressive coefficients of the AR(2) process for TFP growth, the standard deviation coefficient for the TFP growth process, and the two autoregressive coefficients of the AR(2) process for the Monetary Policy shock. In this application, $p = 23$ empirical moments are taken from \citet{changetal:2014} and \cite{agrippino&ricco:2021} for the calibration procedure. 

\subsubsection*{Model Overview}
We follow a modified version of the one-asset HANK model discussed in Appendix B.2 of \citet{auclertetal:2021}. In this model, the economy consists of heterogeneous households that choose the amount they consume, the quantity they save in a nominal Treasury bond, and the hours they work. They receive lump sum payments from the government and profits from firms that they own. The firms are monopolistically competitive firms that set prices according to a quadratic adjustment cost, implying a New Keynesian Phillips Curve. The government issues one-period nominal bonds and sets taxes to balance its periodic budget. Following a standard Taylor Rule, the central bank determines the nominal interest rate on bonds. Markets clear when the total goods produced by the firms equals private consumption, public consumption, and the price adjustment costs; labor demand equals labor supply in efficiency units; and aggregate household savings equals government bonds.

To solve this model, we use the first-order linearization method of \citet{auclertetal:2021}. We only estimate the structural model parameters that do not determine the steady state of the model. While it is feasible to estimate parameters that influence the steady-state, this simplification allows us to avoid recomputing the model, evading computational issues. Hence, identical to the approach taken by \citet{mikkel}, we fix the steady-state parameters at the values in Table B.2 of \citet{auclertetal:2021}.

\subsubsection*{Empirical Moments}
\citet{changetal:2014} and \citet{agrippino&ricco:2021} provide two Structural Vector Autoregressions with point estimates and standard errors of impulse responses with respect to the identified shocks. These estimates allow us to study the impact of productivity (TFP) and Monetary Policy shocks. 

From \citet{changetal:2014}, we directly obtain the estimates corresponding to the TFP shock. In particular, we use the reported values of the blue solid lines of Figures 9 and 11 of \citet{changetal:2014} that represent distributional innovations or productivity shocks. The three impulse responses of (1) GDP or aggregate output (in the model), (2) TFP, and (3) the fraction of people with earnings smaller than $2/3$ of GDP per capita are used for this application. The last response variable is taken from cross-sectional data of the Current Population Survey (CPS). Although the CPS has a limited sample size, as noted in \citet{changetal:2014}, the estimation approach for these impulse responses already takes into consideration the uncertainty caused by the small CPS samples. 

On the other hand, from \citet{agrippino&ricco:2021}, the impulse response estimates used for this application are induced by Monetary Policy shocks. We use the reported impulse response estimates from the solid blue lines of Figure 3 of \citet{agrippino&ricco:2021} for the responses of aggregate output (referred to as ``industrial production'' in the original paper), price level (the ``consumer price index''), and the model's annualized nominal interest rate (``1-year Treasury rate''). To match the quarterly structural model, we use the end-of-quarter impulse response estimates. 

For the moment-matching procedure, we concentrate on a few impulse horizons and follow \citet{mikkel}. First, we only study four impulse response horizons: the impact horizon, the 1-quarter horizon, the 2-quarter horizon, and the 8-quarter horizon.  Second, we account for the fact that \citet{agrippino&ricco:2021} impulse responses are normalized such that an impact or shock will generate a 100 basis point increase in the Treasury rate, while \citet{changetal:2014} impulse responses are with respect to a shock of one standard deviation. Finally, following the Bernstein-von Mises theorem, we interpret the reported point estimates of the paper as posterior medians and the standard errors as ones implied by credible intervals with a normal approximation. This is done because both \citet{changetal:2014} and \citet{agrippino&ricco:2021} report Bayesian posterior quantiles. Accounting for all this, our application makes use of $p = 23$ empirical moments.

\subsubsection*{Results}

We estimate the $k = 7$ structural model parameters with the $p = 23$ moments using a diagonal weighting matrix of the inverse marginal variances of each moment: $\hat{W}_{j,j} = 1/\hat{\sigma}_{j}^{2}$. Each minimum distance estimate reported in Table~\ref{table:hank} is an overall optimum resulting from gradient-based numerical optimization with 100 different starting values. Details of the procedure in generating the starting values can be found in Footnote 27 of \citet{mikkel}. 

The estimation approach is typical in the HANK setting since models of this type often imply nonlinear and non-convex objective functions for the minimum distance framework. These objective functions are normally difficult to manipulate analytically (leading to the need for numerical approximation). Moreover, the sometimes non-convex nature of these functions can lead to estimates that only reflect local optima rather than global optima.

\begin{table}[tb!]
    \centering
    \caption{HANK: Parameter Estimates and Standard Error Bounds}\vspace*{.2cm}
    \begin{tabular}{cccc} \toprule
        & $\hat{\theta}$ & $\underline{\sigma}_{\theta}/\sqrt{n}$ & $\overline{\sigma}_{\theta}^{CP}/\sqrt{n}$ \\ \midrule
        Taylor Rule & 1.0600 & 0.0000 & 0.1000 \\
        NKPC & 0.0090 & 0.0024 & 0.0080 \\
        TFP Shock (AR1) & 0.0080 & 0.0044 & 0.1460 \\
        TFP Shock (AR2) & -0.0400 & 0.0000 & 0.1890 \\
        TFP Shock (Std) & 0.0060 & 0.0004 & 0.0005 \\
        Monetary Shock (AR1) & 0.7020 & 0.0073 & 0.1100 \\
        Monetary Shock (AR2) & 0.0750 & 0.0000 & 0.1630 \\ \bottomrule
    \end{tabular}
    \label{table:hank}\vspace{.2cm}
    \begin{minipage}{1\textwidth} 
        {\small Notes: The table shows the structural parameter estimates and their corresponding standard error bounds using a diagonal weighting matrix. This estimation procedure relies on point estimates generated by gradient-based numerical optimization with 100 distinct starting values. The overall optimum is used for marginal standard error computation. See Footnote 27 of \cite{mikkel} for details. 
        Best-case standard errors ($\underline{\sigma}_{\theta}$) are obtained via the SDP approach, while worst-case bounds ($\overline{\sigma}_{\theta}^{CP}$) follow the \cite{mikkel} method. 
        Parameters reported are the Taylor Rule coefficient on inflation (``Taylor Rule''), slope of the New Keynesian Phillips Curve (``NKPC''), first and second autoregressive coefficients (``AR1'', ``AR2''), and standard deviation (``Std'') of the TFP and monetary shock processes.\par}${}$\\
    \end{minipage}
\end{table}

In Table~\ref{table:hank}, under the worst-case standard errors, the two autoregressive coefficients for the TFP shock process and the second autoregressive coefficient for the monetary shock process are implied to be statistically insignificant. However, the gap between the best- and worst-case standard errors for the three parameters is rather large as the best-case standard errors are either effectively 0 or a few decimal places smaller in magnitude compared to the worst-case standard errors. Although the first autoregressive parameter for the monetary shock is unambiguously positive in the estimate of the upper bound, the worst-case approach is uninformative about the impact of TFP impulse responses, an equally vital component of the model. The best-case standard errors then offer an alternative perspective on the TFP parameter estimates and highlight how the conjunctive use of the best- and worst-case standard errors can be of value to researchers. 

In this application, the best-case and worst-case standard errors disagree by a substantial amount. As argued in Section~\ref{sec:intro}, this signals that it may be worthwhile to estimate the underlying correlation structure among the moments used in \citet{changetal:2014} and \citet{agrippino&ricco:2021}. Given the large gap between the best- and worst-case standard errors, three out of the seven structural parameters of the HANK model have confidence intervals that may or may not contain zero. As there is a lot of uncertainty over the estimates and interpretations of the results of the minimum distance approach, exam
ining the correlation structure of the empirical moments becomes a reasonable next step. 

\subsection{The Effect of Public Housing on Children's Outcomes}\label{sec:empirics:ts2sls}
We now apply our method to the two-sample instrumental variables (TSIV) setting originally proposed by \cite{klevmarken_missing_1982} and popularized by \cite{angrist1992effect}. In this framework, the researcher observes two samples that do not contain the full set of variables required for standard IV estimation, but share a common set of instruments. For example, one sample contains the outcome and instruments, and the other contains the endogenous regressor and instruments. We focus on the just-identified version of the two-sample 2SLS (TS2SLS) estimator, which is commonly used for its computational simplicity and statistical efficiency (\cite{inoue_two-sample_2010}).
TS2SLS has been widely used in the literature on intergenerational mobility (\citealp{olivetti_name_2015}; \citealp{barone_intergenerational_2021}), in public-policy studies (\citealp{currie_are_2000}; \citealp{deng_early-life_2022}), and in credit-supply shocks in household and firm contexts (\citealp{crossley_house_2024}; \citealp{renkin_credit_2024}). 

\subsubsection*{Model Overview}
Let sample 1 consist of observations $\{(y_{1i}, z_{1i})\}_{i=1}^{n_1}$, and sample 2 consist of $\{(x_{2i}, z_{2i})\}_{i=1}^{n_2}$, where $y_{1i} \in \mathbb{R}$ is the scalar outcome, $x_{2i} \in \mathbb{R}^k$ is the endogenous regressor, and $z_{1i}, z_{2i} \in \mathbb{R}^q$ are instruments. Let $Y_1$, $X_2$, $Z_1$, and $Z_2$ denote the matrices stacking these vectors across $i$. 
For simplicity, consider a just-identified case where $q=k$. 
The TS2SLS estimator is then given by:
\begin{align*}
    \hat{\theta} = (Z_2'X_2/n_2)^{-1} C Z_1'Y_1/n_1, \quad\text{where}\:  C = (Z_2'Z_2/n_2)^{-1} Z_1'Z_1/n_1.
\end{align*}

This expression shows that $\hat{\theta}$ is a smooth function of two empirical moments
\begin{align*}
    \hat{\mu}_1 = Z_2'X_2/n_2, \quad\text{and}\quad
    \hat{\mu}_2 = Z_1'Y_1/n_1.
\end{align*}

Hence, we can write $\hat{\theta} = h(\hat{\mu}_1,\hat{\mu}_2)$ for a differentiable map $h$, and apply the Delta method to obtain the inference for $\hat{\theta}$.
Under the usual two-sample assumption, the two moment vectors are treated as independent; the covariance block linking $\hat{\mu}_1$ and $\hat{\mu}_2$ is set to zero. This restriction is stronger than requiring the two datasets to contain different individuals: it requires the sampling errors in the estimated first-stage and reduced-form moments to be uncorrelated. Our method relaxes that restriction and delivers sharp lower and upper bounds for the asymptotic variance of $\hat{\theta}$.
In the context of \cite{currie_are_2000}, this relaxation is empirically relevant. Although the CPS and the 1990 Census do not share individual observations, they cover overlapping cohorts and labor markets and are subject to common aggregate conditions and policy environments. These shared features plausibly induce correlation in sampling error across the estimated first-stage and reduced-form moments.

\subsubsection*{Empirical Moments}
We illustrate this using the influential empirical application in \cite{currie_are_2000}, which studies the effect of public housing on children’s outcomes. 
The endogenous variable is whether a household lives in public housing, and the instrument is an indicator of whether the household has children of different genders. This instrument exploits eligibility rules from the Department of Housing and Urban Development, under which families with mixed-gender children are eligible for larger housing units and thus more likely to enter public housing.
The outcome variables are drawn from the 1990–1995 CPS March Supplement (sample 2), while the household covariates come from the 1990 Census (sample 1).\footnote{%
We use the cleaned CPS and Census extracts provided in the replication package of \cite{choi2018weak}.}
The analysis focuses on four outcomes, as in Table 4 of \cite{currie_are_2000}: monthly rent payments, overcrowding (fewer than three rooms), residence in large buildings with more than 50 units, and whether any child has been held back in school.

\subsubsection*{Results}
Estimation results are reported in Table \ref{tab:currieyelowitz}. The first column reproduces the point estimates in \cite{currie_are_2000}. The next two columns report conventional standard errors from \cite{currie_are_2000} and \cite{inoue_two-sample_2010}, respectively.
The remaining columns present explicit sharp bounds on the standard errors derived in Section \ref{sec:analytic} under diagonal and block-diagonal dependence structures. 
Since the block-diagonal covariance structure is fully known and the analytic formulation is feasible in this setting, there is no need to solve the equivalent semidefinite program numerically.
Accordingly, we compute the bounds directly using the closed-form expressions in Section \ref{sec:analytic}. The upper bound for standard errors in each case coincides with the standard error of the worst-case proposed in \cite{mikkel}.

\begin{table}[tb!]
    \centering
    \caption{The Effect of Public Housing on Children's Outcomess: Parameter Estimates and Standard Error Bounds}\vspace*{.2cm}
\resizebox{\textwidth}{!}{%
\begin{tabular}{lccccccc}
\toprule
& & \multicolumn{2}{c}{Conventional SE} & \multicolumn{2}{c}{{Diag only}} & \multicolumn{2}{c}{{Block diag}} \\
\cmidrule(lr){3-4} \cmidrule(lr){5-6} \cmidrule(lr){7-8}
& $\hat{\theta}$ &$\hat{\sigma}_{\theta}^{CY}/\sqrt{n}$& $\hat{\sigma}_{\theta}^{IS}/\sqrt{n}$ & $\underline{\sigma}_{\theta}/\sqrt{n}$ & $\overline{\sigma}_{\theta}^{CP}/\sqrt{n}$ & $\underline{\sigma}_{\theta}/\sqrt{n}$ & $\overline{\sigma}_{\theta}^{CP}/\sqrt{n}$ \\ \midrule
 Rental payment & 0.3717 &0.0589& 0.1125 & 0.0000 & 0.5080 & 0.0369 & 0.1546 \\
 Family is overcrowded & -0.1595 &0.0624& 0.0732 & 0.0000 & 0.3350 & 0.0241 & 0.1007\\ 
 Dense building & -0.1154 &0.0468& 0.0543 & 0.0000 & 0.2479 & 0.0192 & 0.0746 \\
 Child was held back & -0.1113 &0.0691& 0.0666 & 0.0000 & 0.2896 & 0.0389 & 0.0941 \\
 \bottomrule
    \end{tabular}}
    \label{tab:currieyelowitz}\vspace{.2cm}
    \begin{minipage}{1\textwidth} 
        {\small Note: The table shows the parameter estimates, the corresponding standard errors, and their bounds. Standard errors in columns (2)–(3) follow \cite{currie_are_2000} and \citet{inoue_two-sample_2010}.
        Columns (4)–(7) report explicit sharp bounds from Section \ref{sec:analytic}, which coincide with the explicit upper bound in \cite{mikkel}.
        The SDP approach in that paper solves the same problem numerically; here we rely on the analytic closed-form version.
        \par}${}$\\
    \end{minipage}
\end{table}

We first observe that the conventional standard errors lie between our lower and upper bounds.
The diagonal-only case is largely uninformative, with the feasible set spanning from near zero to relatively wide upper bounds.
In contrast, imposing the block-diagonal structure dramatically tightens the range, yielding informative and economically meaningful uncertainty estimates.
These results suggest that accounting for within-block correlation can substantially reduce the identified set for standard errors without relying on independence assumptions.
Because there are only two moments in this application, the block-diagonal constraint is especially informative.


\section{Summary and Discussion}\label{sec:summary}

In this paper, we contribute to the literature on bounding the standard error of parameters estimated from moment conditions across different samples. First, we derive an explicit sharp lower bound in settings where no information about cross-sample correlations is available. Second, we develop computationally feasible sharp bounds based on a semidefinite program, which delivers the best-case standard error in more general settings. Empirical illustrations in both macroeconomic and microeconomic contexts demonstrate the practical usefulness of our approach.
Together with \citet{mikkel}, this paper provides a comprehensive set of tools for computing sharp lower and upper bounds in a wide range of situations. While the existing literature focuses primarily on calibrated parameters in macroeconometric settings, we expand the scope of applications to include two-step estimators that are also relevant to microeconometric settings.

A key takeaway from the empirical applications is how the informativeness of the bounds depends on the structure of the reduced-form moments and on the amount of auxiliary information available. As the analytic expressions in Section \ref{sec:analytic} illustrate, the bounds can become extremely wide when the number of reduced-form moments increases: the lower bound often collapses to zero, while the upper bound can grow very large, yielding limited guidance for inference. However, as shown in Sections \ref{sec:empirics:menu} and \ref{sec:empirics:ts2sls}, incorporating additional information about the correlation matrix can substantially sharpen the bounds, bringing the lower and upper bounds closer and thereby producing a more informative range for the plausible standard errors. In the menu-cost application, even a single strong piece of information leads to a dramatic tightening of the bounds. In the TS2SLS case, with only two reduced-form moments and a block-diagonal restriction on the correlation matrix, the resulting bounds are highly informative, offering practical guidance for empirical researchers while remaining agnostic about arbitrary cross-dataset dependence.

\vspace{1cm}
\bibliographystyle{ecta}
\bibliography{biblio}

@article{vohra2025inference,
  title={Inference for the Marginal Value of Public Funds},
  author={Vohra, Vedant},
  journal={Available at SSRN 5738770},
  year={2025}
}

@article{mikkel,
  title={Standard errors for calibrated parameters},
  author={Cocci, Matthew D and Plagborg-M{\o}ller, Mikkel},
  journal={Review of Economic Studies},
  volume={92},
  number={5},
  pages={2952--2978},
  year={2025},
  publisher={Oxford University Press UK}
}

@incollection{handbook,
	author = {Geert Ridder and Robert Moffitt},
	doi = {https://doi.org/10.1016/S1573-4412(07)06075-8},
	editor = {James J. Heckman and Edward E. Leamer},
	issn = {1573-4412},
	pages = {5469-5547},
	publisher = {Elsevier},
	series = {Handbook of Econometrics},
	title = {The Econometrics of Data Combination},
	url = {https://www.sciencedirect.com/science/article/pii/S1573441207060758},
	volume = {6},
	year = {2007}}

@article{angrist1992effect,
  title={The effect of age at school entry on educational attainment: an application of instrumental variables with moments from two samples},
  author={Angrist, Joshua D and Krueger, Alan B},
  journal={Journal of the American statistical Association},
  volume={87},
  number={418},
  pages={328--336},
  year={1992},
  publisher={Taylor \& Francis}
}

@BOOK{boyd,
  title = {Convex Optimization},
  author = {Stephen Boyd and Lieven Vandenberghe},
  year = {2004},
  publisher = {Cambridge University Press},
}

@article{alvarez&lippi:2014,
    title = {Price Setting with Menu Cost for Multiproduct Firms},
    author = {Fernando Alvarez and Francesco Lippi},
    journal = {Econometrica},
    volume = {82},
    number = {1},
    pages = {89-135},
    year = {2014}
}

@article{alvarezetal:2016,
    title = {The Real Effects of Monetary Shocks in Sticky Price Models: A Sufficient Statistic Approach},
    author = {Fernando Alvarez and Hervé Le Bihan and Francesco Lippi},
    journal = {American Economic Review},
    volume = {106},
    number = {10},
    pages = {2817-2851},
    year = {2016}
}

@article{auclertetal:2021,
    title = {Using the Sequence-Space Jacobian to Solve and Estimate Heterogeneous-Agent Models},
    author = {Adrien Auclert and Bence Bardóczy and Matthew Rognlie and Ludwing Straub},
    journal = {Econometrica},
    volume = {89},
    number = {5},
    pages = {2375-2408},
    year = {2021}
}

@article{mckayetal:2016,
    title = {The Power of Forward Guidance Revisited},
    author = {Alisdair McKay and Emi Nakamura and Jón Steinsson},
    journal = {American Economic Review},
    volume = {106},
    number = {10},
    pages = {3133-3158},
    year = {2016}
}

@article{changetal:2014,
    title = {Heterogeneity and Aggregate Fluctuations},
    author = {Minsu Chang and Xiaohong Chen and Frank Schorfheide},
    journal = {National Bureau of Economic Research Working Paper Series},
    year = {2023}
}

@article{agrippino&ricco:2021,
    title = {The Transmission of Monetary Policy Shocks},
    author = {Silvia Miranda-Agrippino and Giovanni Ricco},
    journal = {American Economic Journal: Macroeconomics},
    volume = {13},
    number = {3},
    pages = {74-107},
    year = {2021}
}

@article{choi2018weak,
  title={Weak-instrument robust inference for two-sample instrumental variables regression},
  author={Choi, Jaerim and Gu, Jiaying and Shen, Shu},
  journal={Journal of Applied Econometrics},
  volume={33},
  number={1},
  pages={109--125},
  year={2018},
  publisher={Wiley Online Library}
}

@article{renkin_credit_2024,
	author = {Renkin, Tobias and Z{\"u}llig, Gabriel},
	doi = {10.1257/mac.20220079},
	issn = {1945-7707},
	journal = {American Economic Journal: Macroeconomics},
	language = {en},
	month = apr,
	number = {2},
	pages = {1--28},
	shorttitle = {Credit {Supply} {Shocks} and {Prices}},
	title = {Credit {Supply} {Shocks} and {Prices}: {Evidence} from {Danish} {Firms}},
	url = {https://www.aeaweb.org/articles?id=10.1257/mac.20220079},
	urldate = {2025-07-29},
	volume = {16},
	year = {2024}}

@article{currie_are_2000,
	author = {Currie, Janet and Yelowitz, Aaron},
	doi = {10.1016/S0047-2727(99)00065-1},
	issn = {0047-2727},
	journal = {Journal of Public Economics},
	month = jan,
	number = {1},
	pages = {99--124},
	title = {Are public housing projects good for kids?},
	url = {https://www.sciencedirect.com/science/article/pii/S0047272799000651},
	urldate = {2025-07-28},
	volume = {75},
	year = {2000}}

@article{inoue_two-sample_2010,
	author = {Inoue, Atsushi and Solon, Gary},
	doi = {10.1162/REST_a_00011},
	issn = {0034-6535},
	journal = {The Review of Economics and Statistics},
	month = aug,
	number = {3},
	pages = {557--561},
	title = {Two-{Sample} {Instrumental} {Variables} {Estimators}},
	url = {https://doi.org/10.1162/REST_a_00011},
	urldate = {2025-07-28},
	volume = {92},
	year = {2010}}

@article{barone_intergenerational_2021,
	author = {Barone, Guglielmo and Mocetti, Sauro},
	doi = {10.1093/restud/rdaa075},
	issn = {0034-6527},
	journal = {The Review of economic studies},
	language = {eng},
	note = {Publisher: Oxford University Press},
	number = {4 (321)},
	pages = {1863--1891},
	shorttitle = {Intergenerational {Mobility} in the {Very} {Long} {Run}},
	title = {Intergenerational {Mobility} in the {Very} {Long} {Run}: {Florence} 1427--2011},
	volume = {88},
	year = {2021}}

@article{olivetti_name_2015,
	author = {Olivetti, Claudia and Paserman, M. Daniele},
	doi = {10.1257/aer.20130821},
	issn = {0002-8282},
	journal = {American Economic Review},
	language = {en},
	month = aug,
	number = {8},
	pages = {2695--2724},
	shorttitle = {In the {Name} of the {Son} (and the {Daughter})},
	title = {In the {Name} of the {Son} (and the {Daughter}): {Intergenerational} {Mobility} in the {United} {States}, 1850-1940},
	url = {https://www.aeaweb.org/articles?id=10.1257/aer.20130821},
	urldate = {2025-07-29},
	volume = {105},
	year = {2015}}

@article{crossley_house_2024,
	author = {Crossley, Thomas F. and Levell, Peter and Low, Hamish},
	doi = {10.1016/j.jebo.2024.05.002},
	issn = {0167-2681},
	journal = {Journal of Economic Behavior \& Organization},
	month = jul,
	pages = {86--105},
	title = {House price rises and borrowing to invest},
	url = {https://www.sciencedirect.com/science/article/pii/S0167268124001707},
	urldate = {2025-07-29},
	volume = {223},
	year = {2024}}

@article{deng_early-life_2022,
	author = {Deng, Zichen and Lindeboom, Maarten},
	copyright = {{\copyright} 2022 The Authors. Journal of Applied Econometrics published by John Wiley \& Sons Ltd.},
	doi = {10.1002/jae.2897},
	issn = {1099-1255},
	journal = {Journal of Applied Econometrics},
	language = {en},
	note = {\_eprint: https://onlinelibrary.wiley.com/doi/pdf/10.1002/jae.2897},
	number = {4},
	pages = {771--787},
	title = {Early-life famine exposure, hunger recall, and later-life health},
	url = {https://onlinelibrary.wiley.com/doi/abs/10.1002/jae.2897},
	urldate = {2025-07-29},
	volume = {37},
	year = {2022}}

@techreport{klevmarken_missing_1982,
	author = {Klevmarken, Anders},
	copyright = {http://www.econstor.eu/dspace/Nutzungsbedingungen},
	institution = {IUI Working Paper},
	language = {eng},
	number = {62},
	title = {Missing {Variables} and {Two}-{Stage} {Least}-{Squares} {Estimation} from {More} than {One} {Data} {Set}},
	type = {Working {Paper}},
	url = {https://www.econstor.eu/handle/10419/95205},
	urldate = {2025-07-28},
	year = {1982}}

@article{kapovich1995moduli,
	author = {Kapovich, Michael and Millson, John},
	journal = {Journal of Differential Geometry},
	number = {2},
	pages = {430--464},
	publisher = {Lehigh University},
	title = {On the moduli space of polygons in the Euclidean plane},
	volume = {42},
	year = {1995}}

@article{pagan1984econometric,
  title={Econometric issues in the analysis of regressions with generated regressors},
  author={Pagan, Adrian},
  journal={International economic review},
  pages={221--247},
  year={1984},
  publisher={JSTOR}
}

@article{kydland&prescott:1982,
    title = {Time To Build and Aggregate Fluctuations},
    author = {Finn E. Kydland and Edward C. Prescott},
    journal = {Econometrica},
    volume = {50},
    number = {6},
    pages = {1345-1370},
    year = {1982}
}

@article{gourinchas&parker:2003,
    title = {Consumption Over the Life Cycle},
    author = {Pierre-Olivier Gourinchas and Jonathan A. Parker},
    journal = {Econometrica},
    volume = {70},
    number = {1},
    pages = {47-89},
    year = {2003}
}

@article{kaplan&violante:2014,
    title = {A Model of Consumption Response to Fiscal Stimulus Payments},
    author = {Greg Kaplan and Giovanni L. Violante},
    journal = {Econometrica},
    volume = {82},
    number = {4},
    pages = {1199-1239},
    year = {2014}
}

@article{arellano:2008,
    title = {Default Risk and Income Fluctuations in Emerging Economies},
    author = {Cristina Arellano},
    journal = {American Economic Review},
    volume = {98},
    number = {3},
    pages = {690-712},
    year = {2008}
}

@article{eatonetal:2011,
    title = {An Anatomy of International Trade: Evidence From French Firms},
    author = {Jonathan Eaton and Samuel Kortum and Francis Kramarz},
    journal = {Econometrica},
    volume = {79},
    number = {5},
    pages = {1453-1498},
    year = {2011}
}

@article{nakamura&steinsson:2018,
    title = {Identification in Macroeconomics},
    author = {Emi Nakamura and Jón Steinsson},
    journal = {Journal of Economic Perspectives},
    volume = {32},
    number = {3},
    pages = {59-86},
    year = {2018}
}

\vspace{1cm}
\appendix
\section*{Appendix}
\section{Proofs}

This section collects mathematical proofs of the main results.

\subsection{Proof of Theorem \ref{thm:explicit}}\label{sec:thm:explicit}

\begin{proof}
Let $\Xi = (\ell\ell') \circ \Sigma$, where $\circ$ denotes the Hadamard product, i.e., its entry in the $i$-th row and $j$-th column is $\Xi_{ij} = \ell_i \Sigma_{ij} \ell_j$ for $i,j \in \{1,\cdots,p\}$.
Let $v \sim N(0,\Xi)$.
By \eqref{eq:phi_linear} and \eqref{eq:mu_asym},
we have
\begin{align*}
    \sigma = \left\Vert \sum_{j=1}^{p}v_j\right\Vert_2,
\end{align*}
where $\left\Vert \cdot \right\Vert_2$ denotes the $L^2$-norm defined by $\left\Vert u \right\Vert_2 = \left( E u^2 \right)^{1/2}$ for any scalar random variable $u$.
Therefore, it suffices to find the sharp lower and upper bounds of $\left( E u^2 \right)^{1/2}$ under the knowledge of only the diagonal elements $(\Xi_{11},\cdots,\Xi_{pp})' = (\ell_1^2 \Sigma_{11},\cdots,\ell_p^2 \Sigma_{pp})'$ of $\Xi$.

The upper bound follows from the triangle inequality
$$
\left\Vert\sum_{j=1}^p v_j\right\Vert_2 \le \sum_{j=1}^p \Xi_{jj}^{1/2}.
$$
This upper bound is sharp because the random vector $v = (\Xi_{11}^{1/2},\cdots,\Xi_{pp}^{1/2})' Z$, where $Z \sim N(0,1)$, attains this upper bound as
$
\left\Vert\sum_{j=1}^p v_j\right\Vert_2 = \sum_{j=1}^p \Xi_{jj}^{1/2}.
$

The lower bound also follows from the triangle inequality
\begin{align*}
\left\Vert \sum_{j=1}^p v_j \right\Vert_2
= 
\left\Vert v_m + \sum_{j\neq m} v_j \right\Vert_2
\ge 
\left\Vert v_m\right\Vert_2 - \sum_{i\neq m} \left\Vert v_i \right\Vert_2 
\ge 
\Xi_{mm}^{1/2} - \sum_{j\neq m} \Xi_{jj}^{1/2}.
\end{align*}
Since a norm is non-negative, the truncated lower bound is 
\begin{align*}
\left\Vert \sum_{j=1}^p v_j \right\Vert_2 
\ge 
\max\left\{ \Xi_{mm}^{1/2} - \sum_{j\neq m} \Xi_{jj}^{1/2}, 0\right\}.
\end{align*}
To show the sharpness of this lower bound, we branch into two cases depending on whether
$
\Xi_{mm}^{1/2} > \sum_{j\neq m} \Xi_{jj}^{1/2}
$
or not.

First, consider the case of 
$
\Xi_{mm}^{1/2} > \sum_{j\neq m} \Xi_{jj}^{1/2}.
$
In this case, define the random vector $v$ such that 
$$
v_m = \Xi_{mm}^{1/2} Z 
\qquad\text{ and }\qquad 
v_j = -\Xi_{jj}^{1/2} Z
\text{ for all } j \neq m,
$$
where $Z \sim N(0,1)$.
Then, the norm of the sum of this random vector attains the lower bound
$
\left\Vert \sum_{j=1}^p v_j \right\Vert_2
=
\Xi_{mm}^{1/2} - \sum_{j\neq m} \Xi_{jj}^{1/2}
$
by construction.

Second, consider the case of 
$
\Xi_{mm}^{1/2} \leq \sum_{j\neq m} \Xi_{jj}^{1/2}.
$
In this case, there exists a vector 
$(\theta_1,\cdots,\theta_p)' \in \mathbb{R}^p$ of $p$ angles such that $\sum_{j=1}^p \Xi_{jj}^{1/2} (\cos \theta_j, \sin \theta_j) = (0,0)$ by the polygon inequality \citep[][Lemma 1]{kapovich1995moduli}.
Let $Z,W \sim N(0,1)$ be two independent random variables, and construct a random vector $v$ by setting
$$
v_j
=
\Xi_{jj}^{1/2} \left(\cos\theta_j \cdot Z + \sin\theta_j \cdot W\right)
\qquad\text{ for each } j \in \{1,\cdots,p\}
$$
with such an angle vector $(\theta_1,\cdots,\theta_p)'$.
Then, it follows that 
$$
\sum_{j=1}^p v_j = \left(\sum_{j=1}^p \Xi_{jj}^{1/2} \cos \theta_j\right)Z + \left(\sum_{j=1}^p \Xi_{jj}^{1/2} \sin \theta_j\right)W = 0,
$$
and the norm of the sum of this degenerate random vector $v$ thus attains the lower bound
$
\left\Vert \sum_{j=1}^p v_j \right\Vert_2
=
0.
$
\end{proof}

\subsection{Proof of Proposition \ref{prop:consistency}}\label{sec:prop:consistency}
\begin{proof}
By Assumption \ref{asm:sufficient2} (i), we can decompose $\mathcal{I} = \mathcal{I}_D \cup \mathcal{I}_O$, where $\mathcal{I}_D = \{(1,1),...,(p,p)\}$ consists of all the diagonal index pairs and $\mathcal{I}_O = \mathcal{I} \backslash \mathcal{I}_D$ consists of off-diagonal index pairs from $\mathcal{I}$, if any exist.
Define
\begin{align*}
\mathcal{C}^p(\mathcal{I}) \equiv \left\{R \succeq 0 \ : \ R = R', R_{jk}=1 \text{ for all } (j,k) \in \mathcal{I}_D, \text{ and } R_{jk}=0 \text{ for all } (j,k) \in \mathcal{I}_O \right\}.
\end{align*}
This set $\mathcal{C}^p(\mathcal{I})$ is bounded because $(e_j \pm e_k)' R (e_j \pm e_k) \geq 0$ implies $R_{jk} \in [-1,1]$.
Furthermore, this set $\mathcal{C}^p(\mathcal{I})$ is also closed because the positive semidefinite cone and the affine constraint set (characterized by $R=R'$, $R_{jk}=1$ for all $(j,k) \in \mathcal{I}_D$, and $R_{jk}=0$ for all $(j,k) \in \mathcal{I}_O$) are both closed.

For any $\Tilde\Sigma \in S(\Sigma,\mathcal{I})$, we can write
\begin{align*}
\ell'\Tilde\Sigma\ell
=&
l'Rl + \sum_{(j,k) \in \mathcal{I}_O} \ell_j \Sigma_{jk} \ell_k,
\end{align*}
where $l \equiv (\ell_1 s_1, ...,\ell_p s_p)'$ and $R \in \mathcal{C}^p(\mathcal{I})$.
Likewise, for any $\Tilde\Sigma \in S(\hat\Sigma,\mathcal{I})$, we can write
\begin{align*}
\hat\ell'\Tilde\Sigma\hat\ell
=&
\hat l' R \hat l
+ \sum_{(j,k) \in \mathcal{I}_O} \hat\ell_j \hat\Sigma_{jk} \hat\ell_k,
\end{align*}
where $\hat l \equiv (\hat \ell_1 \hat s_1, ...,\hat \ell_p \hat s_p)'$ and $R \in \mathcal{C}^p(\mathcal{I})$.

Observe that 
$
\hat\ell_i \hat\Sigma_{ij} \hat\ell_j
\stackrel{p}{\to}
\ell_i \Sigma_{ij} \ell_j
$ 
for each $(i,j) \in \mathcal{I}_O$ by the Continuous Mapping Theorem under Assumption \ref{asm:sufficient2} (ii).
Thus, it remains to show
\begin{align*}
    \varphi(\hat l) \stackrel{p}{\rightarrow} \varphi(l)
    \qquad\text{where } \varphi(l) \equiv \min_{R \in \mathcal{C}^p(\mathcal{I})} l' R l.
\end{align*}
Note that $\mathcal{C}^p(\mathcal{I})$ is compact as it is closed and bounded as argued earlier.
Also, note that the mapping $(l,R) \mapsto l'R'$ is continuous.
Therefore, by Berge's Maximum Theorem, $\varphi: \mathbb{R}^p \rightarrow \mathbb{R}$ is continuous.
Since $\hat\ell_j \hat s_j \stackrel{p}{\rightarrow} \ell_j s_j$ for each $j = 1,...,p$ by Assumption \ref{asm:sufficient2} (i)--(ii),
it follows from the Continuous Mapping Theorem that $\varphi(\hat l) \stackrel{p}{\rightarrow} \varphi(l)$.

Combining these results together yields
\begin{equation*}
\min_{\Tilde{\Sigma} \in S(\hat\Sigma,\mathcal{I})} {\hat{\ell}'\Tilde{\Sigma}\hat{\ell}}
\ \ \stackrel{p}{\longrightarrow} \ \
\min_{\Tilde{\Sigma} \in S(\Sigma,\mathcal{I})} {{\ell}'\Tilde{\Sigma}{\ell}}.
\end{equation*}
Another application of the Continuous Mapping Theorem proves the proposition.
\end{proof}

\color{black}

\subsection{Proof of Proposition \ref{prop_sdp_solution}}
\label{sec:appendix1.1}

\begin{proof}
Throughout this proof, we omit the hats from our notation.
    First, we shall show that we can rewrite the minimum distance objective function as the objective function in Equation \eqref{sdp}. Note that any covariance matrix $\Sigma$ can be written as a product of two diagonal matrices and a corresponding correlation matrix:
        \begin{align*}
            \Sigma &= D \Omega D\\
            \text{where } D &\equiv \mathrm{diag}(\Sigma)^{\frac{1}{2}}
        \end{align*}
        Since the expression inside the square root is an asymptotic variance, it must be that 
        \begin{align*}
            \ell'\Tilde{\Sigma}\ell &\geq 0\\
            \implies \ell'D\Omega D\ell &\geq 0
        \end{align*}
        Consequently, as $f(x) = \sqrt{x}$ is a monotonic transformation for any $x \geq 0$, the minimal solution to Equation \eqref{eq:pre_sdp_opt} is directly related to the minimal solution in Equation \eqref{sdp}:
        \begin{align*}
            \Sigma_{\mathrm{min}} = D\Omega_{\mathrm{min}}D
        \end{align*}
        Second, note that the constraints of the optimization problem in \eqref{eq:pre_sdp_opt} are equivalent to the constraints outlined in Equations~\eqref{sdp_constraint} and~\eqref{sdp_affine_corr} as this is equivalent to the statement that $\Omega$ is a correlation matrix $\Longleftrightarrow \Omega_{j,j} = 1 \ \forall j$ and $\Omega$ is a positive semidefinite matrix. This follows from the fact that a covariance matrix is always positive semidefinite. 
        
        By invoking the known property that SDP is a type of convex optimization problem (\citet{boyd}), we have that the solution to the SDP in Equations \eqref{sdp}, \eqref{sdp_constraint}, and \eqref{sdp_affine_corr} is a global minimum. Therefore, solving the optimization problem implied by the minimum distance framework is equivalent to solving a simple SDP.
\end{proof}

\subsection{Proof of Proposition \ref{prop_sdp_kkt}}
\label{sec:appendix1.2}
We now derive the KKT conditions for the nonstandard SDP specified in Section~\ref{bcs:derivation}. As the SDP defined in \hyperref[prop_sdp_kkt]{Proposition \ref{prop_sdp_kkt}} is slightly different from the standard-form SDP in most of the literature, re-deriving the KKT conditions is a necessary step to ensure that the correct optimality criteria are being checked for robustness. The proof that the conditions noted in Proposition~\ref{prop_sdp_kkt} are equivalent to the necessary and sufficient optimality conditions for convex programs begins with the claim that allows us to reduce the nonlinear semidefiniteness constraint into a simple linear constraint involving the trace operator for matrices. We shall first formally state and then prove this claim. 

\noindent \textbf{Claim:} \textit{For any symmetric matrix $B$, $\max_{A \succeq 0}-\langle A,B\rangle$ is finite if and only if $B \succeq 0$.}
\begin{proof}
    Note that the inner product between two symmetric matrices $A$ and $B$ is defined as the trace of the product of these matrices: $\langle A,B\rangle \equiv \mathrm{tr}(AB)$. 
    
    For the proof of necessity, note that if $B \succeq 0$, then we immediately have that $\langle A,B\rangle \geq 0$ since the product of two symmetric positive semidefinite matrices is positive semidefinite and hence their trace is non-negative. Given this, it must also be that $\max_{A \succeq 0}-\langle A,B\rangle \leq 0$. From this sharp bound, the maximum value achieved is then zero and consequently finite. 

    For sufficiency, assume for a contradiction that $B \not\succeq 0$, then we can choose an $A \succeq 0$ such that $\max_{A \succeq 0}-\langle A,B\rangle$ becomes as negative as desired. This can be done by selecting an $A$ with large enough positive eigenvalue, which when interacted with a negative eigenvalue of $B \not\succeq 0$, becomes arbitrarily large and negative. Negating $\max_{A \succeq 0}-\langle A,B\rangle$, which recovers the original expression, yields a maximal value that can be made to be arbitrarily large and hence non-finite. 
\end{proof}

This claim allows us to impose the linear Lagrangian constraint of the trace for our dual function in the KKT derivations. With this result, we can proceed to deriving the full KKT conditions for our nonstandard SDP: 

\begin{proof}
        Recall that the original SDP has the following primal problem: 
        \begin{align*}
            \min_{\Omega} \ &\ell D\Omega D\ell'\\
            \text{s.t. } &\Omega \succeq 0\\
            &\Omega_{i,i} = 1 \text{ for } i = 1,...,p
        \end{align*}
        The primal feasibility conditions follow immediately by definition as we simply have to check that the constraints in the above problem are satisfied by any $\Omega$ we want to check for optimality. We can write the Lagrangian of the problem as 
        \begin{align*}
            \mathcal{L}(\Omega, \Lambda,\Psi) = \ell D\Omega D\ell' + \mathrm{tr}(\Lambda^{T}\Omega) + \sum_{i=1}^{p}\Psi_{i}(1-\Omega_{i,i})
        \end{align*}
        The Lagrangian variable for the equality constraint is straightforward.
        
        Then, we take the derivative of the Lagrangian with respect to the $p \times p$ matrix $\Omega$. We shall do this piece-by-piece. For the first term, note that $\ell D \Omega D\ell'$ is a scalar so we can write $\ell D \Omega D\ell' = \mathrm{tr}(\ell D \Omega D\ell')$. Hence, 
        \begin{align*}
            \frac{\partial}{\partial \Omega}\ell D \Omega D\ell' &= \frac{\partial}{\partial \Omega}\mathrm{tr}(\ell D \Omega D\ell')\\
            &= (\ell D)^{T}(D\ell')^{T}\\
            &= D^{T}\ell'\ell D^{T}\\
            &= D\ell'\ell D
        \end{align*}
        where the second line follows the matrix cookbook rule that $\frac{\partial}{\partial X}\mathrm{tr}(AXB) = A^{T}B^{T}$, and the last line follows from the fact that $D$ is a diagonal matrix. 

        For the second term, note that $\frac{\partial}{\partial X}\mathrm{tr}(XA) = A^{T}$ and so 
        \begin{align*}
            \frac{\partial}{\partial \Omega}\mathrm{tr}(\Lambda^{T}\Omega) = \Lambda
        \end{align*}

        Finally, for the last component, we can see that for each term in the sum that 
        \begin{align*}
            \frac{\partial}{\partial \Omega}\Psi_{i}(1-\Omega_{i,i}) = \begin{pmatrix}
                0 & 0 & 0 & \hdots & 0\\
                0 & \ddots & \vdots & \hdots & 0\\
                0 & 0 & \Omega_{i,i} & \hdots & \vdots\\
                \vdots & \vdots & \vdots & \ddots & 0\\
                0 & 0 & 0 & 0 & 0\\
            \end{pmatrix}
        \end{align*}
        and so, the derivative of the last expression is 
        \begin{align*}
            \frac{\partial}{\partial \Omega}\sum_{i=1}^{p}\Psi_{i}(1-\Omega_{i,i}) = \begin{pmatrix}
                -\Psi_{1} & 0 & 0 & \hdots & 0\\
                0 & -\Psi_{2} & 0 & \hdots & 0\\
                \vdots & \vdots & \ddots & \hdots & \vdots\\
                0 & 0 & 0 & \ddots & 0\\
                0 & 0 & 0 & 0 & -\Psi_{p}
            \end{pmatrix} = -\mathrm{diag}(\Psi)
        \end{align*}
        Putting this all together yields 
        \begin{align*}
            \frac{\partial}{\partial \Omega}\mathcal{L}(\Omega, \Lambda, \Psi) = D\ell'\ell D + \Lambda - \mathrm{diag}(\Psi)
        \end{align*}
        This allows us to write the corresponding dual function to our primal as 
        \begin{align*}
            g(\Lambda, \Psi) &= \inf_{\Omega}\mathcal{L}(\Omega,\Lambda,\Psi)\\
            &= \begin{cases}
                -\sum_{i=1}^{p}\Psi_{i}, &\text{if } D\ell'\ell D + \Lambda - \mathrm{diag}(\Psi) = 0\\
                -\infty, &\text{otherwise}
            \end{cases}
        \end{align*}
        Thus, the dual problem is 
        \begin{align*}
            \max \ &g(\Lambda,\Psi) = \begin{cases}
                \sum_{i=1}^{p}\Psi_{i}, &\text{if } D\ell'\ell D + \Lambda - \mathrm{diag}(\Psi) = 0\\
                -\infty, &\text{otherwise}
            \end{cases}\\
            \text{s.t. } &\Lambda \succeq 0
        \end{align*}
        Or more succinctly 
        \begin{align*}
            \max \ &\sum_{i=1}^{p}\Psi_{i}\\
            \text{s.t. } &D\ell'\ell D + \Lambda - \mathrm{diag}(\Psi) = 0\\
            &\Lambda \succeq 0
        \end{align*}
        Using the constraints of this dual problem, we get the dual feasibility constraint from the matrix inequality restriction on $\Lambda$ and the stationarity constraint from the equality restriction. The complementary slackness constraint follows from taking the product of the primal and dual variables restricted by matrix inequalities, namely $\Omega$ and $\Lambda$. 
    \end{proof}

\subsection{Proof of Proposition \ref{prop_se_range}}
\label{proof_prop_range}
\begin{proof}
    For notation, let $\Sigma_{\mathrm{F}}$ be the full-information covariance matrix, $\Sigma_{\mathrm{BCS}}$ be the best-case covariance matrix, and $\Sigma_{\mathrm{WCS}}$ be the worst-case covariance matrix. 

    First, note that by definition, $c\Sigma c^{T} \geq 0, \forall c,\Sigma$ since $\Sigma$ is positive semidefinite and that 
    \begin{align*}
        \ell\Sigma_{\mathrm{BCS}}\ell' &\leq \ell\Sigma_{\mathrm{WCS}} \ell'\\
        \ell\Sigma_{\mathrm{F}}\ell' &\geq \ell\Sigma_{\mathrm{BCS}}\ell'\\
        \ell\Sigma_{\mathrm{F}}\ell' &\leq \ell\Sigma_{\mathrm{WCS}}\ell'
    \end{align*}
    Second, we shall prove that the interval of values between the lower and upper bound of the standard errors $[\ell\Sigma_{\mathrm{BCS}}\ell', \ell\Sigma_{\mathrm{WCS}}\ell']$ is convex. Begin by noting that for $\Sigma_{\mathrm{BCS}} \succeq 0, \Sigma_{\mathrm{WCS}} \succeq 0$, and $\alpha \in [0,1]$, we have
    \begin{align*}
        z^{T}(\alpha \ell\Sigma_{\mathrm{BCS}}\ell' + (1-\alpha)\ell\Sigma_{\mathrm{WCS}}\ell')z = \alpha z^{T}\ell\Sigma_{\mathrm{BCS}}\ell'z + (1-\alpha)z^{T}\ell\Sigma_{\mathrm{WCS}}\ell'z \geq 0
    \end{align*}
    which works for all $z$ by the positive semidefiniteness of $\Sigma_{\mathrm{BCS}}$ and $\Sigma_{\mathrm{WCS}}$. Moreover, since $\ell\Sigma_{\mathrm{BCS}}\ell' \leq \ell\Sigma_{\mathrm{WCS}}\ell'$, we have that 
    \begin{align*}
        &\alpha \ell\Sigma_{\mathrm{BCS}}\ell' + (1-\alpha)\ell\Sigma_{\mathrm{WCS}}\ell' \leq \alpha \ell\Sigma_{\mathrm{WCS}}\ell' + (1-\alpha)\ell\Sigma_{\mathrm{WCS}}\ell' = \ell\Sigma_{\mathrm{WCS}}\ell'\\
        &\alpha \ell\Sigma_{\mathrm{BCS}}\ell' + (1-\alpha)\ell\Sigma_{\mathrm{WCS}}\ell' \geq \alpha \ell\Sigma_{\mathrm{BCS}}\ell' + (1-\alpha)\ell\Sigma_{\mathrm{BCS}}\ell' = \ell\Sigma_{\mathrm{BCS}}\ell'
    \end{align*}
    and so $\forall \alpha \in [0,1], \alpha \ell\Sigma_{\mathrm{BCS}}\ell' + (1-\alpha)\ell\Sigma_{\mathrm{WCS}}\ell' \in [\ell\Sigma_{\mathrm{BCS}}\ell', \ell\Sigma_{\mathrm{WCS}}\ell']$. Moreover, since the interval $[\ell\Sigma_{\mathrm{BCS}}\ell', \ell\Sigma_{\mathrm{WCS}}\ell']$ is convex, it is also connected and contains no gaps. Finally, since $\ell\Sigma_{\mathrm{F}}\ell' \in [\ell\Sigma_{\mathrm{BCS}}\ell', \ell\Sigma_{\mathrm{WCS}}\ell']$, then $\ell\Sigma_{\mathrm{F}}\ell'$ is of the form 
    \begin{align*}
        \ell\Sigma_{\mathrm{F}}\ell' = \alpha \ell\Sigma_{\mathrm{BCS}}\ell' + (1-\alpha)\ell\Sigma_{\mathrm{WCS}}\ell'
    \end{align*}
    for some $\alpha \in [0,1]$.
\end{proof}

\subsection{A two-moment zero-cancellation calculation}
\label{proof_prop_0bcse}
\begin{proof}
     Recall that under the covariance matrix decomposition into the correlation matrix, our asymptotic variance takes the form 
    \begin{align*}
        \ell\Sigma \ell' = \ell D\Omega D\ell'
    \end{align*}
    where $\ell$ is defined as in Equation~\eqref{eq:min_distance_loadings}, $\Omega$ is the correlation matrix of the empirical moments, and $D$ is a diagonal matrix of the marginal variances of the moments. With $\ell_{i}$ defined as the $i$-th entry of $\ell$, deriving the condition in Proposition \ref{prop_se_range} requires setting the above equation to zero and isolating the correlation variable, $\rho$. 
    \begin{align*}
        \begin{bmatrix}
            \ell_{1} & \ell_{2}
        \end{bmatrix}\begin{bmatrix}
            s_{1}^{2} & 0\\
            0 & s_{2}^{2}
        \end{bmatrix}\begin{bmatrix}
            1 & \rho \\
            \rho & 1
        \end{bmatrix}\begin{bmatrix}
            s_{1}^{2} & 0\\
            0 & s_{2}^{2}
        \end{bmatrix}\begin{bmatrix}
            \ell_{1}\\
            \ell_{2}
        \end{bmatrix} &= 0\\
        \begin{bmatrix}
            \ell_{1}s_{1}^{2} & \ell_{2}s_{2}^{2}
        \end{bmatrix}\begin{bmatrix}
            1 & \rho\\
            \rho & 1
        \end{bmatrix}
        \begin{bmatrix}
            s_{1}^{2} & 0\\
            0 & s_{2}^{2}
        \end{bmatrix}\begin{bmatrix}
            \ell_{1}\\
            \ell_{2}
        \end{bmatrix} &= 0\\
        \begin{bmatrix}
            \ell_{1}s_{1}^{2} + \rho \ell_{2}s_{2}^{2} & \rho \ell_{1}s_{1}^{2} + \ell_{2}s_{2}^{2}
        \end{bmatrix}\begin{bmatrix}
            s_{1}^{2} & 0\\
            0 & s_{2}^{2}
        \end{bmatrix}\begin{bmatrix}
            \ell_{1}\\
            \ell_{2}
        \end{bmatrix} &= 0\\
        \begin{bmatrix}
            \ell_{1}s_{1}^{4} + \rho \ell_{2}s_{1}^{2}s_{2}^{2} & \rho \ell_{1}s_{1}^{2}s_{2}^{2} + \ell_{2}s_{2}^{4} 
        \end{bmatrix}\begin{bmatrix}
            \ell_{1}\\
            \ell_{2}
        \end{bmatrix} &= 0\\
        \implies \ell_{1}^{2}s_{1}^{4} + \rho \ell_{1}\ell_{2}s_{1}^{2}s_{2}^{2} + \rho \ell_{1}\ell_{2}s_{1}^{2}s_{2}^{2} + \ell_{2}^{2}s_{2}^{4} &= 0\\
        \implies \rho = -\left(\frac{\ell_{1}^{2}(s_{1}^{2})^{2} + \ell_{2}^{2}(s_{2}^{2})^{2}}{2\ell_{1}\ell_{2}s_{1}^{2}s_{2}^{2}}\right)
    \end{align*}
    Hence, the best-case standard errors will be zero if the right-hand side in the expression above is a valid correlation, i.e. the weighted average of the square of the empirical variances as specified in the final line falls in the interval $[-1,1]$.

    Note that the expression in the last display has the same structure as:
    \begin{align*}
        \left|\frac{a^{2} + b^{2}}{2ab}\right| \leq 1 \Longleftrightarrow |a^{2} + b^{2}| \leq 2|ab|
    \end{align*}
    which is only true when $a = b$ and thus it must be that a lower bound of zero occurs if and only if $\ell_{1}s_{1}^{2} = \ell_{2}s_{2}^{2}$.   
\end{proof}

\subsection{Proof of Proposition \ref{prop_structure_location}}
\label{proof_prop_location}
We begin by proving the general statement of Proposition \ref{prop_structure_location}
\begin{proof}
    Let $f(\Omega) = \ell D \Omega D\ell'$. Optimizing over the set $C$ defined in Proposition \ref{prop_structure_location}, our problem takes the form 
    \begin{align*}
        &\min_{\Omega \in C}f(\Omega)
    \end{align*}
    We want to show that $\Omega_{\mathrm{BCS}} = \arg\min_{\Omega \in C}f(\Omega)$ is on the boundary of $C$: $\Omega_{\mathrm{BCS}} \in \mathrm{bd}(C)$ 
    
    To do this, suppose that the above statement is not true and that $\Omega^{*} = \arg\min_{\Omega \in C}f(\Omega)$ is in the interior of $C$. Thus, $\exists \epsilon > 0$ such that the ball of radius $\epsilon$ $B_{\epsilon}(\Omega^{*}) \subset C$. However, since $f$ is linear in $\Omega$, $\frac{\partial}{\partial \Omega}f(\Omega) \neq 0$ in $C$. Hence, there exists $\Omega$ that can yield a smaller value for $f$. The smallest such $\Omega$ must occur at the boundary of $C$ by the linearity of $f$. Call this matrix $\Omega_{\mathrm{BCS}}$ and we have that $\Omega_{\mathrm{BCS}} = \arg\min_{\Omega \in C}f(\Omega) \in \mathrm{bd}(C)$ i.e. the best-case correlation structure occurs at the boundary of our constraint set. 

    An identical argument proves the result for the worst-case correlation structure $\Omega_{\mathrm{WCS}}$.
\end{proof}

The $p = 2$ special case of Proposition \ref{prop_structure_location} follows from the derivation in Section~\ref{proof_prop_0bcse}. Using the expression for the standard error in the second-to-last line of this proof, we simply must find the optimal $\rho$ that will yield the largest possible and smallest possible standard errors. 
\begin{proof}
    \begin{align*}
        \frac{\partial}{\partial \rho}\ell\Sigma\ell' = \frac{\partial}{\partial \rho}\ell_{1}^{2}s_{1}^{4} + \rho \ell_{1}\ell_{2}s_{1}^{2}s_{2}^{2} + \rho \ell_{1}\ell_{2}s_{1}^{2}s_{2}^{2} + \ell_{2}^{2}s_{2}^{4} = 2\ell_{1}\ell_{2}s_{1}^{2}s_{2}^{2}
    \end{align*}
    From this first-order condition, we see that the derivative with respect to the correlation $\rho$ does not depend on $\rho$. Moreover, note that $s_{1}^{2} \geq 0$ and $s_{2}^{2} \geq 0$ as they are variances. Thus, to maximize and/or minimize this quantity linear in its choice variable, we must first determine the sign of this derivative which represents the slope of this linear function.

    The only two terms that can vary in sign are $\ell_{1}$ and $\ell_{2}$ which are multiplied together as a product. Hence, if $\mathrm{sgn}(\ell_{1}\ell_{2}) > 0$, we have an upward sloping line as a function of $\rho$, and if $\mathrm{sgn}(\ell_{1}\ell_{2}) < 0$, we have a downward sloping line. For the former case, the point on the line that yields a minimal value for the expression and retains the property of being a valid correlation is $-1$ and hence $\rho_{\mathrm{BCS}} = -1$. For the latter case, we have the opposite and so  $\rho_{\mathrm{BCS}} = 1$. Finally, as the expression is linear in $\rho$, the maximal correlation $\rho_{\mathrm{WCS}}$ simply takes the opposite signed value of  $\rho_{\mathrm{BCS}}$ for the respective case. 
\end{proof}

\end{document}